\documentclass[letterpaper,11pt]{article}
\usepackage{amsmath, amsthm, amssymb}
\usepackage[usenames, dvipsnames]{color}
\usepackage[normalem]{ulem} %for strikethrough
\usepackage{fullpage}
\usepackage[numbers, sort]{natbib}
\usepackage[noend]{algorithmic}
\usepackage{tikz, pgfplots}
\usepackage{enumerate}
\usepackage{hyperref}
\usepackage{xspace,color}
\usepackage{graphicx}
\usepackage{caption}
\usepackage{subcaption}
\usepackage{enumitem,linegoal}
\usepackage[linesnumbered,ruled,vlined]{algorithm2e}
\usepackage{comment}	
\usepackage{ctable}					
\newtheorem{claim}{Claim}[section]

\newtheorem{assumption}{Assumption}
\usepackage{mathtools}
\usepackage[flushleft]{threeparttable}
\usepackage{verbatim}
\usepackage{setspace}
\usepackage{bbm}
\usepackage{thm-restate}
\usepackage{footnote}
\makesavenoteenv{tabular}
\makesavenoteenv{table}

\usepackage[margin=1in]{geometry}

%Color crimsonglory for in-progress parts
\definecolor{crimsonglory}{rgb}{0,0,0}%{0.75, 0.0, 0.2}

%More spacing in top and bottom of a table
 % "top" strut
 % "bottom" strut
 % top&bottom struts

% NON-SODA:
 \newtheorem{theorem}{Theorem}[section]
 \newtheorem{conjecture}{Conjecture}[section]
 \newtheorem{lemma}[theorem]{Lemma}
 
 \newtheorem{corollary}[theorem]{Corollary}
 
 \newtheorem{definition}[theorem]{Definition}
 \newtheorem{problem}[theorem]{Problem}

\makeatletter
\def\GrabProofArgument[#1]{ #1: \egroup\ignorespaces}
\def\proof{\noindent\textbf\bgroup Proof%
	\@ifnextchar[{\GrabProofArgument}{. \egroup\ignorespaces}}

\makeatother

% SODA:
%\newtheorem{claim}{Claim}[section] 
%\newtheorem{conj}{Conjecture}[section] 

\usetikzlibrary{arrows,shapes,snakes,automata,backgrounds,petri,calc}
\usepackage[latin1]{inputenc}

\newcommand{\efx}{\textsf{EFX}}
\newcommand{\shortcite}{\cite}
\newcommand{\ech}{\mathsf{H}}
\newcommand{\echpi}{\ech_p}
\newcommand{\graph}{G}
\newcommand{\graphh}{G'}
\newcommand{\ppart}{V}
\newcommand{\vertex}{v}
\newcommand{\ppartt}{U}
\newcommand{\vertexx}{u}
\newcommand{\pparttt}{W}

\newcommand{\mywlog}{without loss of generality}
\newcommand{\oneway}{one-way}
\newcommand{\dep}{corresponding}
\newcommand{\Wlog}{Without loss of generality}
\newcommand{\reachable}{reachable}
\newcommand{\rightreachable}{rightward-reachable}
\newcommand{\rreachable}{rightward-reachable}
\newcommand{\reachableset}{C}
\newcommand{\etal}{\textit{et al. }}
\newcommand{\rhat}{\hat{\reachableset}}
\newcommand{\rainbow}{\mathsf{R}}
\newcommand{\rainbowperm}{\mathsf{R}_p}
\newcommand{\bothomega}[2] {\Phi_{#1, #2}}
\newcommand{\partomega}[1] {\Phi_{#1,*}}
\newcommand{\domega}[1] {\Phi_{*, #1}}

\newcommand{\bothpi}[2] {\Pi_{#1, #2}}
\newcommand{\partpi}[1] {\Pi_{#1,*}}
\newcommand{\dpi}[1] {\Pi_{*, #1}}

\newcommand{\bigvertex}[1]{v_{#1}}

\newcounter{proccnt}

\newcommand{\konote}[1]{}

\usepackage[normalem]{ulem} %for strikethrough

\title{Rainbow Cycle Number and EFX Allocations: \\(Almost) Closing the Gap}

\author{
    Shayan Chashm Jahan
    \and
    Masoud Seddighin
	\and Seyed-Mohammad Seyed-Javadi
	\and Mohammad Sharifi
}

\SetCommentSty{mycommfont}

\SetKwInput{KwInput}{Input}                % Set the Input
\SetKwInput{KwOutput}{Output}  
\begin{document}
	\newcommand{\ignore}[1]{}
\renewcommand{\theenumi}{(\roman{enumi}).}
\renewcommand{\labelenumi}{\theenumi}
\sloppy
\date{} %\date{\today}
\newenvironment{subproof}[1][\proofname]{%%
	  \renewcommand{\Box}{ \blacksquare}%
	\begin{proof}[#1]%
	}{%
	\end{proof}%
}
\newenvironment{subproof2}{%%
	\renewcommand{\Box}{ \blacksquare}%
	\begin{proof}%
	}{%
	\end{proof}%
}
\maketitle

\thispagestyle{empty}
\allowdisplaybreaks
\begin{abstract}
Recently, some studies on the fair allocation of indivisible goods notice a connection between a purely combinatorial problem called the Rainbow Cycle problem and a fairness notion known as $\efx$:  assuming that the rainbow cycle number for parameter $d$ (i.e. $\rainbow(d)$) is $O(d^\beta \log^\gamma d)$, we can find a $(1-\epsilon)$-$\efx$ allocation with $O_{\epsilon}(n^{\frac{\beta}{\beta+1}}\log^{\frac{\gamma}{\beta +1}} n)$ number of discarded goods \cite{chaudhury2021improving}.
The best upper bound on $\rainbow(d)$ is improved in a series of works to $O(d^4)$ \cite{chaudhury2021improving}, $O(d^{2+o(1)})$ \cite{berendsohn2022fixed}, and finally to $O(d^2)$ \cite{Akrami2022}.\footnote{We refer to the note at the end of the introduction for a short discussion on the result of \cite{Akrami2022}.} Also, via a simple observation, we have $\rainbow(d) \in \Omega(d)$ \cite{chaudhury2021improving}. 

In this paper, we introduce another problem in extremal combinatorics. For a parameter $\ell$, we define the rainbow path degree and denote it by $\ech(\ell)$. We show that any lower bound on $\ech(\ell)$ yields an upper bound on $\rainbow(d)$. Next, we prove that $\ech(\ell) \in \Omega(\ell^2/\log \ell)$ which yields an almost tight upper bound of   $\rainbow(d) \in \Omega(d \log d)$. This in turn proves the existence of $(1-\epsilon)$-$\efx$ allocation with $O_{\epsilon}(\sqrt{n \log n})$ number of discarded goods. 
In addition, for the special case of the Rainbow Cycle problem that the edges in each part form a permutation, we improve the upper bound to $\rainbow(d) \leq 2d-4$. We leverage $\ech(\ell)$ to achieve this bound. 

Our conjecture is that the exact value of $\ech(\ell) $ is $ \lfloor 
\frac{\ell^2}{2} \rfloor -1$. We provide some experiments that support this conjecture. Assuming this conjecture is correct, we have $\rainbow(d) \in \Theta(d)$.

\end{abstract}
\section{Introduction}
Fair allocation of indivisible goods has been an important problem in economics and social choice theory~\cite{Aziz2015,brams1996fair,Steinhaus:first,Dubins:first,brams,Procaccia:first,ghodsi2018fair,Saberi:first,Etkin2007,Halpern2020,Moulin2019,Procaccia2020,Pratt1990,Budish2012,Budish:first,barman2017finding,vossen2002fair} with many applications in the real world.\footnote{See \url{spliddit.org} and \url{www.fairoutcomes.com} for example.} In a fair allocation problem, we have a set of $n$ agents and a set of $m$ indivisible goods, and each agent has a valuation function that represents her utility for receiving each subset of goods. The goal is to allocate the goods to the agents fairly \cite{Baklanov2020,Brams2017,amanatidis2015approximation,Barman2020,Garg2019,garg2020improved,Bouveret:first}. 

A critical challenge in a fair allocation problem is to specify a reasonable notion of fairness that is simultaneously robust and practical. For the classic version of the problem that the resource is a single divisible good, a notion such as envy-freeness\footnote{An allocation is envy-free if each agent prefers her share over the other agents' share.} perfectly satisfies these conditions: it is commonly accepted as a notion that represents fairness, and there are strong guarantees for the existence of envy-free divisions \cite{Su1999}. However, the applicability of this notion decreases significantly when dealing with indivisible goods: even for two agents and one good, envy-freeness can not be guaranteed. In recent years, several relaxations of envy-freeness have been introduced to adopt this notion to the indivisible setting. Among these notions, $\efx$ is widely believed to be the most prominent \cite{caragiannis2016unreasonable,chaudhury2021improving,chaudhury2021little,Amanatidis2020,Berger2021,Chaudhury2020,plaut2018almost}. %Many works have done to prove the existence of $\efx$ and its relaxations in different cases \cite{Amanatidis2020, Berger2021, Chaudhury2020, plaut2018almost}.

\begin{definition}
	An allocation is $\efx$ ($\alpha$-$\efx$), if for every agents $i$ and $j$, agent $i$ does not envy ($\alpha$-envy)\footnote{For $\alpha<1$, agent $i$  $\alpha$-envies agent $j$, if the value  of $i$ for his bundle is less than $\alpha$ times his value for bundle of agent $j$.} agent $j$ after removal of any good from the bundle of agent $j$.
\end{definition}

\begin{figure}
	\centering
	\includegraphics[scale=0.9]{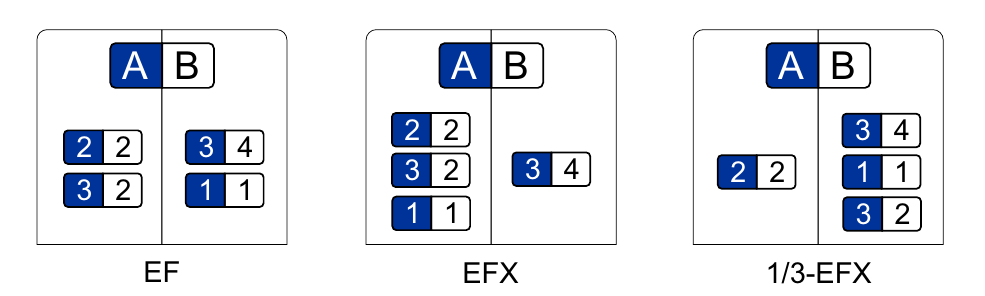}
	\caption{In this figure, three different allocations of four goods to two agents are shown. The valuation of A and B for a good are shown respectively on the left and right sides of the good, and the valuations are additive. The left allocation is envy-free since $v_A(X_A)=2+3=5>v_A(X_B)=3+1=4$ and $v_B(X_B)=4+1=5>v_B(X_A)=2+2=4$. The allocation in the middle is $\efx$ since $v_A(X_A)=6\geq \max_{x\in X_B}v_A(X_B\setminus\{x\})=0$ and $v_B(X_B)=4\geq \max_{x\in X_A}v_B(X_A\setminus\{x\})=4$. Finally, the right  allocation is $1/3$-$\efx$, because $v_A(X_A)=2\geq (1/3) \max_{x\in X_B}v_A(X_B\setminus \{x\})=(3+3)/3=2$ and $v_B(X_B)=7\geq (1/3) \max_{x\in X_A}v_B(X_A\setminus \{x\})=0/3=0$.}
	\label{fig:EF-EFX-alphaEFX-sample}
\end{figure}

See Figure \ref{fig:EF-EFX-alphaEFX-sample} for examples of envy-free, $\efx$, and $\alpha$-$\efx$ allocations. Recent studies suggest that one can obtain strong guarantees on $\efx$ by discarding a subset of goods \cite{chaudhury2021little,caragiannis2019envy}. In a pioneering work, Chaudhury,  Kavitha, Mehlhorn, and Sgouritsa~\shortcite{chaudhury2021little} show that it is possible to find an $\efx$ allocation by discarding at most $n-1$ goods. Further investigations in this direction reveal an intriguing connection between $\efx$ and a purely combinatorial problem called \emph{Rainbow Cycle} problem\footnote{The problem is also known as the Fixed Point Cycle.}~\cite{chaudhury2021improving}. For a multi-partite bidirected graph, a rainbow cycle is a cycle that passes each part at most once. The Rainbow Cycle problem is then defined as follows.

\begin{problem}[Rainbow Cycle]
	\label{rcp}
	For a constant $d$, what is the maximum value $\ell$ such that there exists an $\ell$-partite bidirected graph with no rainbow cycle and the following properties: (i) each part contains at least $d$ vertices, and, (ii) each vertex receives an incoming edge from all other parts other than the one containing it. We call such a value $\ell$ the rainbow cycle number of $d$  and denote it by $\rainbow(d)$.
\end{problem}
We refer to Section \ref{preliminaries} for a more formal definition of this problem.
The connection between the Rainbow Cycle problem and $\efx$ notion was first observed by Chaudhury \etal \shortcite{chaudhury2021improving}: any upper bound on $\rainbow(d)$ yields a corresponding upper bound on the number of discarded goods.

\begin{theorem}[Proved in \cite{chaudhury2021improving}] \label{thm:rainbow2EFX}
	For any constant $\varepsilon \in (0,1/2]$, if there exists constants $\beta, \gamma$ such that $\rainbow(d) \in O(d^\beta \log^\gamma d)$, then we can find a $(1-\varepsilon)$-$\efx$ allocation with $O_{\epsilon}(n^{\frac{\beta}{\beta+1}}\log^{\frac{\gamma}{\beta +1}} n)$ number of discarded goods.
\end{theorem}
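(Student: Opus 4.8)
The plan is to follow the recipe that transforms a combinatorial bound on $\rainbow(d)$ into a fairness guarantee, using the standard ``cut-and-choose''-style iterative argument behind $(1-\varepsilon)$-$\efx$ constructions. First I would recall the iterative partial-allocation framework: we maintain a partial $(1-\varepsilon)$-$\efx$ allocation together with a pool of unallocated goods, and as long as the pool is large we look for a way to make progress — either by moving a good from the pool to an agent who currently has low value, or by performing an ``envy-elimination'' swap that strictly increases a suitable potential (e.g., the number of goods allocated, or a lexicographic potential on bundle values). The key structural object is the \emph{envy graph} (or a layered/champion-graph variant of it) on the agents, and the role of the Rainbow Cycle problem is precisely to bound how large the pool can get before a rainbow cycle — hence an improving swap — is forced to appear.

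Next I would set up the reduction to Problem~\ref{rcp} explicitly. Suppose for contradiction that the pool has more than roughly $t := c_\varepsilon \cdot n^{\beta/(\beta+1)} \log^{\gamma/(\beta+1)} n$ goods at termination. Partition (or group) the agents/bundles into $\ell$ layers according to their current value level (each level being a multiplicative $(1-\varepsilon)$ window), and build an $\ell$-partite bidirected graph whose parts correspond to these layers; a vertex in a part corresponds to a choice of a ``representative'' subset of pool goods of the appropriate size, and we place edges so that property~(ii) of Problem~\ref{rcp} holds — every vertex gets an incoming edge from every other part, witnessed by a champion/envy relation. The number of vertices $d$ per part is governed by how many disjoint candidate subsets we can carve out of the pool, so $d$ scales polynomially with the pool size divided by $\ell$. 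If this graph had a rainbow cycle, standard arguments (rotating bundles around the cycle) would yield a strictly improving reallocation, contradicting termination; hence the graph is rainbow-cycle-free, so $\ell \le \rainbow(d) \in O(d^\beta \log^\gamma d)$.

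Then I would balance the two parameters. On one hand $\ell$ is bounded below by something like $\log_{1/(1-\varepsilon)}$ of the ratio of the largest to smallest relevant value, which is $\Omega_\varepsilon(\log n)$-ish or at least a fixed quantity forcing many layers when the pool is large; on the other hand $\ell \le \rainbow(d) = O(d^\beta \log^\gamma d)$ with $d \approx (\text{pool size})/\ell$. Combining $\ell \cdot d \gtrsim \text{pool size}$ with $\ell \lesssim d^\beta \log^\gamma d$ and solving for the pool size gives $\text{pool size} = O_\varepsilon(n^{\beta/(\beta+1)} \log^{\gamma/(\beta+1)} n)$, since the exponent $\beta/(\beta+1)$ is exactly what one obtains from optimizing $d^{\beta+1}\log^\gamma d \gtrsim$ (pool size) against the number-of-layers constraint. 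The $O_\varepsilon$ hides the dependence on $1/\varepsilon$ coming from the multiplicative-window width and the number of iterations.

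The main obstacle I anticipate is not the arithmetic of the parameter balancing but the careful construction of the auxiliary graph so that property~(ii) — every vertex receives an incoming edge from \emph{every} other part — is genuinely satisfied, and so that a rainbow cycle really does certify an improving reallocation. This requires the right notion of ``champion'' (an agent who, after discarding one good, still would not $(1-\varepsilon)$-envy a chosen subset of the pool) together with a pigeonhole guaranteeing that within each layer there are at least $d$ pairwise-disjoint winnable subsets, and a proof that composing the champion edges around a cycle that visits each layer at most once produces a cyclic chain of strict improvements that can be realized simultaneously. Handling the bookkeeping of $\varepsilon$-losses so they do not accumulate destructively around the cycle, and ensuring the potential strictly increases (so the process terminates), is the delicate part; the rest is the reduction to Problem~\ref{rcp} and the clean parameter optimization described above.
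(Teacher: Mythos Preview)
This theorem is not proved in the present paper at all: it is stated with the explicit annotation ``Proved in \cite{chaudhury2021improving}'' and then used as a black box to convert bounds on $\rainbow(d)$ into bounds on the number of discarded goods. Consequently there is no proof here to compare your proposal against.

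As a separate remark on your sketch relative to what is actually done in \cite{chaudhury2021improving}: your high-level plan (maintain a partial $(1-\varepsilon)$-$\efx$ allocation with a potential, build an auxiliary multi-partite digraph whose rainbow cycle certifies an improving update, then balance $\ell$ against $d$) is the right shape, and the final arithmetic giving the exponent $\beta/(\beta+1)$ is correct. However, the specific construction you describe --- parts as ``value-level layers'' with vertices being disjoint subsets of pool goods --- is not how the reduction works. In the actual argument the parts correspond to \emph{agents} (more precisely, to ``sources'' arising from the champion relation on unallocated goods), each part has size governed by the number of unallocated goods per source, and the edges encode the ``most-envious-agent'' / champion relation; property~(ii) then comes for free from the definition of champions rather than from a pigeonhole on disjoint subsets. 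Your version would have trouble both guaranteeing property~(ii) and ensuring a rainbow cycle yields a genuine improving swap, exactly the obstacle you flag at the end. So while the parameter-balancing half of your outline is fine, the graph-construction half would need to be replaced by the champion-graph construction of \cite{chaudhury2021improving} to go through.
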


The first upper bound on $\rainbow(d)$ was also proposed by Chaudhury \etal \shortcite{chaudhury2021improving}. They proved that $\rainbow(d) \in O(d^4)$ which bounds the number of unallocated goods by  $O_{\epsilon}(n^{\frac{4}{5}})$. Recently, in two parallel studies \cite{berendsohn2022fixed,Akrami2022}, the bound on $\rainbow(d)$ is improved to $O(d^{2+o(1)})$ and $O(d^2)$, yielding an upper bound of $O_{\epsilon}(n^{\frac{2}{3}})$ on the number of unallocated goods. 
Note that a trivial lower bound on $\rainbow(d)$ is $ \Omega(d)$. \footnote{See \cite{chaudhury2021improving} for a matching example.} Therefore, previous results leave a gap of $[\Omega(d), O(d^2)]$ between the best upper bound and the best lower bound. There is a plausible conjecture that $\rainbow(d) \in O(d)$. 

In this paper, we almost close this gap by showing that $\rainbow(d) 
\in O(d\log d)$.
To obtain this bound, we introduce another invariant called rainbow path degree which might be of independent interest. We show that any lower bound on this invariant implies an upper bound on $\rainbow(d)$.  Next, we improve the lower bound on $\rainbow(d)$ by providing an upper bound on the rainbow path degree. 

Before ending this section, we mention that apart from $\efx$ and fair allocation, bounding the rainbow cycles number itself is an interesting extremal problem. Recently, Berendsohn, Boyadzhiyska, and Kozma \shortcite{berendsohn2022fixed} established a connection between two combinatorial problems: Permutation Rainbow Cycle problem which is a special case of Rainbow Cycle problem, and Zero-sum Cycle  problem \cite{Alon2021,Meszaros2021,Alon1993,Alon1993a,Alon1989,Bialostocki1993,Caro1996,schrijver1991simpler}, which is a problem in zero-sum extremal combinatorics. Here we also give an improved upper bound on the permutation rainbow cycle number. We refer to Section \ref{results} for more details on our results and techniques.

\paragraph{A short note on a parallel result.} We note that	parallel and concurrent to this work, Akrami \etal~\shortcite{Akrami2022} also updated their results on arXiv. In  the updated version, their upper bound on $\rainbow(d)$ is improved from $O(d^2)$ to $O(d \log d)$ via a probabilistic  argument. We emphasis that these two studies are parallel and independent.

\section{Preliminaries} \label{preliminaries}

In this paper, our focus is on multi-partite bidirected graphs. For an $\ell$-partite bidirected graph $\graph$, we denote its parts by $\ppart_1,\ppart_2,\ldots,\ppart_{\ell}$.  Also, for a subset $W \subseteq \{\ppart_1,\ppart_2,\ldots,\ppart_{\ell}\}$ we define $\graph[W]$ to be the induced subgraph of $G$ that only includes vertices that belong to the parts in $W$. Thus, $\graph[W]$ has $|W|$ parts.
A path in graph $\graph$ is called rainbow if it passes through each part at most once. The same definition carries over  to cycles. See Figure \ref{fig:rainbow-path-cycle-sample} for an example. 

\begin{figure}
	\centering
	\includegraphics[scale=0.6]{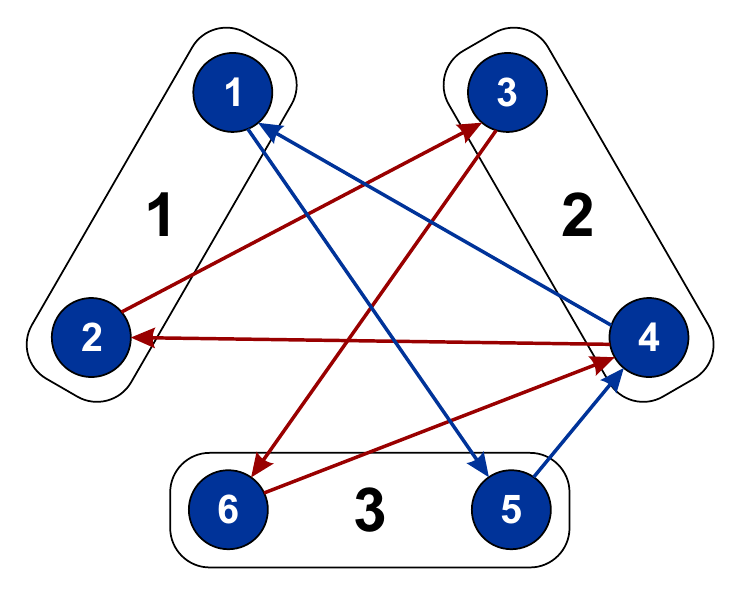}
	\caption{In this figure, path $1\rightarrow5\rightarrow4$ is a rainbow path, and if we add edge $4 \rightarrow 1$ to the end of the path, we have a rainbow cycle. On the other hand, path $2\rightarrow3\rightarrow6  \rightarrow 4$ and cycle $2\rightarrow3\rightarrow6  \rightarrow 4 \rightarrow2$ are not rainbow path and rainbow cycle respectively as they go through part $2$ twice.} 
	\label{fig:rainbow-path-cycle-sample}
\end{figure}

For integers $\ell, d\geq 0$, we define $\bothomega{\ell}{d}$ to be the set of all multi-partite bidirected graphs $\graph$ with the following properties:
\begin{itemize}
	\item $G$ has exactly $\ell$ parts,
	\item each part of $\graph$ has at least $1$ and at most $d$ vertices,
	\item each vertex of $\graph$ has exactly one incoming edge from every other part, 
	\item $\graph$ admits no rainbow cycle.
\end{itemize}
In Figure \ref{fig:sample-omega}, we show an example of a graph in $\bothomega{\ell}{d}$. We also define $\domega{d}$ and $\partomega{\ell}$ as unions of $\bothomega{\ell}{d}$ over all $\ell$ and $d$ respectively, that is, \[\domega{d}=\bigcup_{\ell\geq 0}\bothomega{\ell}{d} \qquad \mbox{and} \qquad \partomega{\ell}=\bigcup_{d \geq 0}\bothomega{\ell}{d}.\] Also, we define $\rainbow(d)$ as the largest $\ell$ such that an $\ell$-partite graph exists in $\domega{d}$, i.e.,
\[
\rainbow(d) = \max_{\ell} \quad \mbox{ s.t. }  \bothomega{\ell}{d} \neq \emptyset.
\]
\begin{figure}
	\centering
	\includegraphics[scale=0.8]{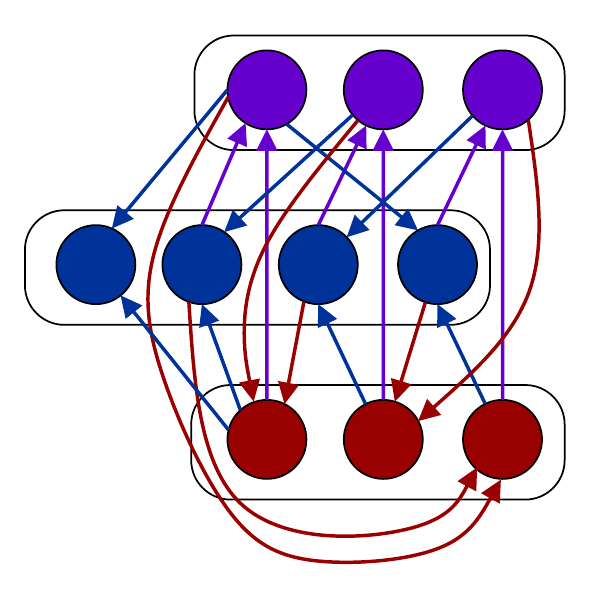}
	\caption{The graph shown in this figure is in $\bothomega{3}{4}$: it contains exactly $3$ parts, each part has at most $4$ vertices, and one can check that two other conditions of $\bothomega{3}{4}$ hold as well. Additionally, by the definition, this graph also belongs to $\partomega{3}$ and $\domega{4}$.} \label{fig:sample-omega}
\end{figure}

%where $\#(\graph)$ is the number of parts in $\graph$. 
Our goal is to give an upper bound on $\rainbow(d)$ for every $d$. To this aim, we introduce another property. 
Let $\graph$ be a multi-partite graph. For every vertex $\vertex \in \graph$, we define $f_\graph(\vertex)$ as the number of vertices in $\graph$ that have a rainbow path to $\vertex$ except $\vertex$ itself. Given $f_\graph(\vertex)$, for every constant $\ell$, we define the rainbow path degree of $\ell$, denoted by $\ech(\ell)$ as follows: 
\[
\ech(\ell) = \min_{\graph \in \partomega{\ell}} \min_{\vertex \in \graph} \quad f_\graph(\vertex).
\]
In other words, $\ech(\ell)$ is the maximum possible value that we are guaranteed that for an $\ell$-partite graph $\graph \in \partomega{\ell}$, for every vertex $\vertex \in \graph$ there are at least $\ech(\ell)$ different vertices with a rainbow path to $\vertex$.  For brevity, we call $\ech(\ell)$ the rainbow path degree of $\ell$.

In order to prove an upper bound on $\rainbow(d)$, we first prove a lower bound on $\ech(\ell)$. Interestingly, though the definition of $\ech(\ell)$ does not depend on $d$, our lower bound on $\ech(\ell)$ results in an almost tight upper bound on $\rainbow(d)$.

In the last part of this section, we mention Stirling's formula for approximating factorials. For every $n>1$, we have:
\begin{equation}
	\sqrt{2\pi n} (\frac{n}{e})^{n} e^{\frac{1}{12n+1}}\leq n!\leq \sqrt{2\pi n} (\frac{n}{e})^{n} e^{\frac{1}{12n}} \label{eq:stirling}.
\end{equation}
%We later use Stirling's formula to prove our lower bound on $\ech(\ell)$. 
In the next section, we briefly  review our results and techniques.

%\begin{definition}
%	Given a $k$-partite bidirected graph $\graph$. We say $\graph$ is $d-\good$, if the following conditions hold:
%	\begin{itemize}
	%		\item each part has at most $d$ vertices,
	%		\item each vertex has exactly one incoming edge from each other part, 
	%		\item $\graph$ admits no rainbow cycle.
	%	\end{itemize}
%\end{definition}
%\begin{definition}
%	For an integer $d>0$, we define $R(d)$ as the largest $k$ such that a 
%	$k$-partite $\good$ graph exists. 
%\end{definition}
%\begin{definition}
%	Let $\graph$ be a $k$-partite $\good$ graph with parts $\ppart_1,\ppart_2,\ldots,\ppart_k$, let $\vertex \in  \ppart_i$ be a vertex, and let $$S \subseteq \bigcup_{1 \leq j \leq k, j \neq i} \ppart_j$$ be a subset of the parts. We define $h(\vertex,S)$ as the number of the vertices in $\vertex' \in S$ such that there exists a rainbow path from $\vertex$ to $\vertex'$. 
%\end{definition}

\section{Our Results and Techniques}
\label{results}

The main result of this paper is an almost tight upper bound on the rainbow cycle number by showing that $\rainbow(d) \in \widetilde{O}(d)$. 
Our techniques are structurally different from previous methods. Indeed, a primary application of our techniques provides a simpler proof for $\rainbow(d) \in O(d^2)$. Using a more in-depth analysis, we improve this bound to ${O}(d \log d)$. To show this, we prove a lower bound for the rainbow path degree and show that $\ech(\ell) \in \Omega(\ell^2/\log \ell)$. 
This in turn implies that an $\efx$ allocation exists that discards at most $O_\epsilon(\sqrt{n \log n})$ goods.

For a better understanding of our techniques, let us overview a simple proof for $\rainbow(d) \in O(d^2)$.\footnote{We emphasize that in the interest of simplicity, our discussion in this section is not completely accurate.} %Recall the definition of $\ech(\ell)$ from Section \ref{preliminaries}. 
We prove this bound by showing that $\ech(\ell) \in \Omega(\ell \sqrt{\ell})$. 
%For a constant $k$, we define $\ech(k)$ to be the maximum possible $x$, such that  for every $(k+1)$-partite graph $G$ satisfying the properties defined in Problem \ref{rcp} and every vertex $\vertex \in \graph$ we can guarantee that there are at least $x$ vertices in $G$ that have a rainbow path to $v$. In this paper, we show that any lower bound on $\ech(k)$ implies a corresponding upper bound on $R(d)$. Let us prove for now that $\ech(k) = \Omega(k \sqrt k)$. 
Let $\graph \in \partomega{\ell+1}$ be an $\ell+1$ partite graph with parts
$\{\ppart_1,\ppart_2,\ldots,\ppart_{\ell+1}\}$ and let $\vertex$ be a vertex in $\ppart_{\ell+1}$.
By definition, we know that there are at least $\ech(\ell+1)$ vertices that have a rainbow path to $v$.
Denote the set of these vertices by  $\reachableset$. Our goal is to provide a lower bound on $|\reachableset|$. Since the vertices in $\reachableset$ belong to parts $\ppart_1,\ppart_2,\ldots,\ppart_\ell$, there is a part that contributes at most $\ech(\ell+1)/\ell$ vertices to $\reachableset$.
\Wlog, suppose that this part is $\ppart_\ell$. Therefore, 

\[|\ppart_\ell \cap \reachableset| \leq \ech(\ell+1)/\ell.\]
In other words, at most $\ech(\ell+1)/\ell$ of the vertices in $\ppart_\ell$ have a rainbow path to $\vertex$.
Now, consider the vertices that have an outgoing edge to $\vertex$. Since $\graph \in \partomega{\ell+1}$, by definition, each part has a vertex with an outgoing edge to $\vertex$. For each part $\ppart_i$, we assume that $\vertex_i$ is the vertex with an outgoing edge to $\vertex$.
Also, note that for every $1 \leq i \leq \ell-1$, vertex $\vertex_i$ has an incoming edge from part $\ppart_{\ell}$. Since $\vertex_i$ has an outgoing edge to $\vertex$, any vertex in $\ppart_\ell$ that has an outgoing edge to $\vertex_i$ has a rainbow path of length $2$ to $\vertex$ and thus belongs to $\reachableset$. Since $|\ppart_\ell \cap \reachableset| \leq |\reachableset|/\ell$, there exists a vertex $\vertexx \in|\ppart_\ell \cap \reachableset|$ that has outgoing edges to at least
\[(\ell-1)/(\ech(\ell+1)/\ell) \simeq \ell^2/\ech(\ell+1)\]
vertices in $\{\vertex_1,\vertex_2,\ldots,\vertex_{\ell-1}\}$. Denote these vertices by $\rhat$ and suppose \mywlog~that $\vertex_{\ell-1} \in \rhat$. We know that in $G[\ppart \setminus \{\ppart_{\ell-1}, \ppart_{\ell+1}\}]$, the number of vertices that have a rainbow path to $\vertexx$ is at least $\ech(\ell-1)$.
These vertices also have a rainbow path to $\vertex$: consider their rainbow path to $\vertexx$, then go to $\vertex_{\ell-1}$ and then to $\vertex$. Also, these vertices do not belong to $\rhat$; otherwise, since $\vertexx$ has outgoing edges to the vertices in $\rhat$, we have a rainbow cycle. Therefore, 
\begin{equation}\label{eq1}
	\ech(\ell+1) \geq \ell^2/\ech(\ell+1) + \ech(\ell-1).
\end{equation}
Using straightforward calculus one can show that Inequality \eqref{eq1} implies $\ech(\ell+1) \in \Omega(\ell\sqrt \ell)$. 

A consequence of this lower bound is an upper bound on $\rainbow(d)$. To see why, assume for simplicity that $\ech(\ell+1) $ is exactly equal to $ \ell \sqrt{\ell}$. We show $\bothomega{d^2+1}{d}$ is empty. To see why, consider a vertex in $\ppart_{d^2+1}$ with a non-zero outgoing degree. By definition of $\ech(d^2+1)$, the number of vertices with a rainbow path to this vertex is at least $d^2\sqrt{d^2} = d^3$, which is equal to the number of vertices in $\{\ppart_1,\ppart_2,\ldots,\ppart_{d^2}\}$. Thus, any outgoing edge from this vertex yields a rainbow cycle.

In Section \ref{sec4}, via a similar but more in-depth analysis, we show that $\ech(\ell) \in \Omega(\ell^2/\log \ell)$. A consequence of this result is the upper bound of $ O(d \log d)$ on the rainbow cycle number, which leaves a gap of $O(\log d)$ factor between the upper bound and the lower bound for the rainbow cycle number. Also, in Section \ref{section:exp}, we show that $\ech(\ell) \in O(\ell^2)$ that leaves a gap of $O(\log \ell)$ factor between  the upper bound and lower bound for the rainbow path degree.

In Section \ref{section:exp}, we represent our experimental results on finding the exact value of $\ech(\ell)$. Our experiments suggest that for small values of $\ell$, we have  $\ech(\ell) = \lfloor{\frac{\ell^2}{2}}\rfloor-1$. Assuming that this conjecture is correct for every $\ell$, we have $\rainbow(d) \in O(d)$. As a future direction, one can think of improving the lower bound on $\ech(\ell)$ to $\Omega(\ell^2)$.

Also, we consider a special case of the Rainbow Cycle problem called the Permutation Rainbow Cycle problem, where each vertex has exactly one outgoing edge to each part. As we mentioned earlier, this problem has some independent applications in extremal combinatorics. We improve the upper bound on the permutation rainbow cycle number to $2d-3$. Next, we leverage the bounds we obtain on $\ech(\ell)$ for small values of $\ell$ in Section \ref{section:exp} to improve the upper bound to $2d-4$.
Furthermore, In Section \ref{section:exp}, we consider the relation between the rainbow cycle number and the rainbow path degree in the permutation case. We show that our conjecture of $\ech(\ell) = \lfloor{\frac{\ell^2}{2}}\rfloor-1$ implies the upper bound of $2d-3$ on $\rainbowperm(d)$ in the permutation case. 

%for a special case of the rainbow cy
%\paragraph{An improved upper bound for the permutation rainbow cycle.}

\section{Upper Bound on the Rainbow Cycle Number}\label{sec4}
We now present our results for the Rainbow Cycle problem. This section is divided into three parts. In the first part, in Lemma \ref{MTH2}, we show that any lower bound on rainbow path degree implies a corresponding upper bound on rainbow cycle number. Next, we prove two lower bounds on $\ech(\ell)$. As a warm up, we start by showing that $\ech(\ell) \in \Omega(\ell\sqrt{\ell})$. %In the second attempt, we improve this bound and show that $\ech(\ell) \in \Omega(\ell^{5/3})$. 
Next, we present the main result of this section, that is, $\ech(\ell) \in \Omega(\ell^2/\log \ell)$. This, combined with Lemma \ref{MTH2}, yields the upper bound of $\rainbow(d) \in O(d \log d)$. 

Lemma \ref{MTH2} shows a simple connection between $\ech(\ell)$ and $\rainbow(d)$. 
The idea behind the proof of Lemma \ref{MTH2} is simple: the rainbow path degree of a vertex cannot be more than the total number of the vertices.
\begin{lemma}\label{MTH2}
	For every $\beta>0$,$\gamma$ if 
$\ech(\ell)\in \Omega(\ell^{1+\beta}\log^\gamma \ell)$ then  $\rainbow(d)\in O(d^{\frac{1}{\beta}}\log^{-\frac{\gamma}{\beta}} d)$.
\end{lemma}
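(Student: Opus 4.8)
The plan is to turn the qualitative remark---``the rainbow path degree of a vertex cannot exceed the number of vertices''---into the claimed polynomial relationship by a contrapositive/extremal argument. Suppose $\ech(\ell)\ge c\,\ell^{1+\beta}\log^\gamma\ell$ for all sufficiently large $\ell$ (for a fixed constant $c>0$ hidden in the $\Omega$). I want to show that $\bothomega{\ell}{d}=\emptyset$ once $\ell$ is larger than some $\ell^\star(d)$ with $\ell^\star(d)\in O(d^{1/\beta}\log^{-\gamma/\beta}d)$; then by definition $\rainbow(d)\le \ell^\star(d)$, which is exactly the claim.

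First I would fix $d$ and a candidate $\ell$, and suppose toward a contradiction that some graph $\graph\in\bothomega{\ell}{d}$ exists. Since $\graph\in\partomega{\ell}$ as well, the definition of $\ech(\ell)$ gives that for \emph{every} vertex $\vertex\in\graph$ there are at least $\ech(\ell)$ distinct vertices with a rainbow path to $\vertex$. Pick any vertex $\vertex$ in part $\ppart_\ell$ (the choice of part is irrelevant by symmetry). All vertices with a rainbow path to $\vertex$ lie in the parts $\ppart_1,\dots,\ppart_{\ell}$ other than the one containing $\vertex$ --- actually in $\ppart_1,\dots,\ppart_{\ell-1}$ --- so their total number is at most $(\ell-1)\cdot d < \ell d$, since each part has at most $d$ vertices. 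Combining the two bounds,
\[
c\,\ell^{1+\beta}\log^\gamma \ell \;\le\; \ech(\ell)\;\le\; f_\graph(\vertex)\;\le\;(\ell-1)d\;<\;\ell d,
\]
hence $c\,\ell^{\beta}\log^\gamma \ell < d$. So a graph in $\bothomega{\ell}{d}$ can only exist when $\ell$ satisfies $\ell^\beta\log^\gamma\ell < d/c$, i.e. $\ell<\ell^\star(d)$ where $\ell^\star(d)$ is the threshold at which $c\,\ell^\beta\log^\gamma\ell$ reaches $d$.

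The remaining step is the routine asymptotic inversion: solving $\ell^\beta\log^\gamma\ell = \Theta(d)$ for $\ell$ gives $\ell=\Theta\!\big(d^{1/\beta}(\log d)^{-\gamma/\beta}\big)$, because $\log\ell = \Theta(\log d)$ in this regime (taking logs of $\ell^\beta\log^\gamma\ell=\Theta(d)$ yields $\beta\log\ell + \gamma\log\log\ell = \log d + O(1)$, so $\log\ell\sim \tfrac1\beta\log d$, and substituting back gives $\ell^\beta = \Theta(d)/\log^\gamma\ell = \Theta(d\,(\log d)^{-\gamma})$). Therefore $\rainbow(d)=\max\{\ell:\bothomega{\ell}{d}\neq\emptyset\}\le \ell^\star(d)\in O\!\big(d^{1/\beta}\log^{-\gamma/\beta}d\big)$, as claimed.

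I do not expect any genuine obstacle here: the only subtlety is making sure the counting bound uses the correct number of available parts (parts other than $\vertex$'s own, which is where the ``$-1$'' comes from and why $(\ell-1)d$ rather than $\ell d$ is the honest bound, though either suffices asymptotically), and being slightly careful that the $\Omega$ and $O$ constants are handled cleanly when $\gamma$ is negative so that $\log^\gamma$ terms are decreasing. If one wanted the cleanest statement, one could also note that the argument in fact shows $\rainbow(d)$ is at most the largest integer $\ell$ with $\ech(\ell)\le (\ell-1)d$, which is a bound that does not even need the $\Omega$-form hypothesis.
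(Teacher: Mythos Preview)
Your proposal is correct and follows essentially the same approach as the paper: both bound $\ech(\ell)\le f_\graph(\vertex)\le \ell d$ for any $\graph\in\bothomega{\ell}{d}$ and then invert the resulting inequality $c\,\ell^\beta\log^\gamma\ell\le d$. The only cosmetic difference is that the paper converts $\log\rainbow(d)$ to $\Theta(\log d)$ by invoking the a priori bound $\rainbow(d)\in[d,d^2]$, whereas you perform the inversion by taking logarithms directly; both are equivalent and yield the same conclusion.
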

\begin{proof} 
% For a graph $\graph\in \bothomega{\ell}{d}$ we know $\graph$ contains $d\ell$ vertices and therefore, the total number of vertices that have a rainbow path to a vertex is upper bounded by $d\ell$.
	Since $\ech(\ell)\in \Omega(\ell^{1+\beta}\log^\gamma \ell)$, there exist $a, \ell_0 \in \mathbf{N}$ such that for every $\ell\geq \ell_0$ inequality 
\[\ech(\ell)\geq a\cdot\ell^{1+\beta}\log^\gamma \ell\]
always holds. Also, by definition, we know that $\bothomega{\rainbow(d)}{d} \neq \emptyset$. Let $G \in \bothomega{\rainbow(d)}{d}$ be a graph. Since the number of vertices in $\graph$ is $d\cdot\rainbow(d)$, the total number of vertices that have a rainbow path to any vertex is upper bounded by $d\cdot\rainbow(d)$.
Furthermore, we have $\rainbow(d)\geq d$, and therefore, for $d\geq \ell_0$, 
\[a\cdot\rainbow(d)^{1+\beta}\log^\gamma \rainbow(d)\leq d\cdot\rainbow(d),\]
which means
%	\begin{align*}
	%		\implies  a.R(d)^{\beta}\log^\gamma R(d)&\leq d\\
	%		\implies  (a.R(d)^{\beta}\log^\gamma R(d))^{\frac{1}{\beta}}&\leq d^{\frac{1}{\beta}} &(\mbox{Raising both sides to the power of} \frac{1}{\beta})\\
	\[\rainbow(d)\leq \frac{1}{a^{\frac{1}{\beta}}}d^{\frac{1}{\beta}}\log^{-\frac{\gamma}{\beta}} \rainbow(d).\] 
	
	Since $\rainbow(d) \in [d,d^2]$, we have $\log d \leq \log \rainbow(d)\leq 2\log d$. Thus,
	
	\[\rainbow(d)\leq \frac{2^{|\frac{\gamma}{\beta}|}}{a^{\frac{1}{\beta}}}d^{\frac{1}{\beta}}\log^{-\frac{\gamma}{\beta}} d,\]
	
	which implies
	\[\rainbow(d)\in O(d^{\frac{1}{\beta}}\log^{-\frac{\gamma}{\beta}} d).\]
\end{proof}

We use Lemma \ref{MTH2} to prove two upper bounds on $\rainbow(d)$. First, in Lemma \ref{ind}, we show that $\ech(\ell) \in \Omega(\ell \sqrt{\ell})$, which  implies $\rainbow(d) \in O(d^2)$.
\begin{lemma}\label{ind}
	For every $\ell \geq 1$, we have $ \ech(\ell +1)\geq \ell \sqrt{\ell}/6.$
\end{lemma}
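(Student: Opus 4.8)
The plan is to prove the recursive inequality sketched in Section~\ref{results}, namely that for a suitable normalization one has a lower bound of the form $\ech(\ell+1) \geq \ell^2/\ech(\ell+1) + \ech(\ell-1)$, and then to turn this into the clean closed-form bound $\ech(\ell+1) \geq \ell\sqrt{\ell}/6$ by induction on $\ell$. I would set up the induction so that the base cases (small $\ell$, where $\ell\sqrt{\ell}/6$ is tiny) are handled trivially by the observation that $\ech(\ell)\geq 1$ whenever $\ell\geq 2$ (any vertex with nonzero out-degree forces at least one predecessor, and in a graph in $\partomega{\ell}$ every vertex receives an incoming edge from every other part), and possibly $\ech(\ell)\geq 0$ vacuously; one should check exactly which small $\ell$ need $\ech\geq 1$ versus $\ech \geq 2$ so that the arithmetic in the inductive step closes.

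For the inductive step I would follow the argument outlined in the excerpt, being careful about the points the authors flagged as ``not completely accurate.'' Take $\graph \in \partomega{\ell+1}$ and a vertex $\vertex$ in part $\ppart_{\ell+1}$ realizing the minimum, so $f_\graph(\vertex) = \ech(\ell+1)$; let $\reachableset$ be the set of vertices with a rainbow path to $\vertex$, so $|\reachableset| = \ech(\ell+1) =: k$. By averaging over the $\ell$ parts $\ppart_1,\dots,\ppart_\ell$ there is a part — say $\ppart_\ell$ — with $|\ppart_\ell \cap \reachableset| \leq k/\ell$. For each $i\leq \ell$ pick $\vertex_i \in \ppart_i$ with an edge $\vertex_i \to \vertex$ (exists since $\graph\in\partomega{\ell+1}$). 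Each $\vertex_i$ with $i\leq \ell-1$ has an incoming edge from $\ppart_\ell$; any $w\in\ppart_\ell$ with $w\to\vertex_i$ satisfies $w\to\vertex_i\to\vertex$, a rainbow path of length $2$, so $w\in\ppart_\ell\cap\reachableset$. Counting pairs, the $\ell-1$ vertices $\vertex_1,\dots,\vertex_{\ell-1}$ each receive an edge from some vertex of $\ppart_\ell\cap\reachableset$, so by averaging some $\vertexx\in\ppart_\ell\cap\reachableset$ has out-edges to at least $(\ell-1)/(k/\ell) = \ell(\ell-1)/k$ of the $\vertex_i$'s; call this set $\rhat$, and assume $\vertex_{\ell-1}\in\rhat$ after relabeling. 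Now apply the induction hypothesis inside $\graph[\ppart\setminus\{\ppart_{\ell-1},\ppart_{\ell+1}\}]$, which lies in $\partomega{\ell-1}$: there are at least $\ech(\ell-1)$ vertices with a rainbow path to $\vertexx$ in that subgraph. Each such vertex $z$ extends to a rainbow path $z \leadsto \vertexx \to \vertex_{\ell-1} \to \vertex$ — this is rainbow because the $z\leadsto\vertexx$ portion avoids parts $\ppart_{\ell-1}$ and $\ppart_{\ell+1}$, $\vertexx\in\ppart_\ell$, $\vertex_{\ell-1}\in\ppart_{\ell-1}$, $\vertex\in\ppart_{\ell+1}$ — so $z\in\reachableset$. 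Moreover no such $z$ can lie in $\rhat$: if $z\in\rhat\setminus\{\vertex_{\ell-1}\}$ then $z\in\ppart_j$ for some $j\leq\ell-2$, $\vertexx\to z$, and $z\leadsto\vertexx$ gives a rainbow cycle; and $z\neq\vertex_{\ell-1}$ since $z$ lives in the subgraph omitting $\ppart_{\ell-1}$. Also $\rhat\setminus\{\vertex_{\ell-1}\}\subseteq\reachableset$ (each $\vertex_i$ has $\vertex_i\to\vertex$). Hence $\reachableset$ contains the $\geq \ell(\ell-1)/k - 1$ vertices of $\rhat\setminus\{\vertex_{\ell-1}\}$, the $\geq\ech(\ell-1)$ vertices $z$, and these are disjoint, giving $k \geq \ell(\ell-1)/k - 1 + \ech(\ell-1)$, i.e.\ $k^2 + k \geq \ell(\ell-1) + k\cdot\ech(\ell-1)$.

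From $k^2 + k \geq \ell(\ell-1) + k\cdot \ech(\ell-1)$ together with the inductive hypothesis $\ech(\ell-1)\geq (\ell-2)\sqrt{\ell-2}/6$, I would deduce $k \geq \ell\sqrt\ell/6$ by straightforward (if slightly tedious) algebra: solving the quadratic in $k$ and using $\sqrt{\ell(\ell-1) + k\,(\ell-2)^{3/2}/6 + 1/4} - 1/2$ as a lower bound for $k$, then bounding below by $\ell^{3/2}/6$. The main obstacle is this final calculus/arithmetic step — making the induction actually close requires the constant $1/6$ (and the handling of the ``$-1$'' and ``$+k$'' slack terms, and of the floor/rounding in the averaging arguments) to be compatible across all $\ell$, including the transition from the trivial base cases; I would verify the first few values of $\ell$ by hand and then show monotonicity of the relevant bound. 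A secondary subtlety, worth stating carefully rather than waving away as in the sketch, is the disjointness claim between $\rhat\setminus\{\vertex_{\ell-1}\}$ and the set of $z$'s, which is exactly where the no-rainbow-cycle hypothesis is used.
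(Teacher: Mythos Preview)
Your proposal is correct and follows essentially the same route as the paper: the combinatorial core (averaging to find a part with few reachable vertices, averaging again to find $u$ with many out-edges into $\{\vertex_1,\dots,\vertex_{\ell-1}\}$, then invoking the inductive hypothesis on the $(\ell-1)$-part subgraph omitting $\ppart_{\ell-1}$ and $\ppart_{\ell+1}$, and using the no-rainbow-cycle assumption for disjointness of $\rhat$ and the set $S$ of predecessors of $u$) is exactly the paper's argument. The only difference is organizational: the paper works by contradiction with the specific threshold $|\reachableset|<\ell\sqrt{\ell}/6$ baked in, first proving as a standalone claim that $|\rhat|\leq 2\sqrt{\ell}/3$ for any such $u$ (combining the inductive hypothesis and the assumed bound on $|\reachableset|$), and then colliding this with the pigeonhole lower bound $|\rhat|\geq 6(\ell-1)/\sqrt{\ell}$ to get the one-line contradiction $18(\ell-1)\leq 2\ell$. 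Your version instead derives the raw recursion $k(k+1)\geq \ell(\ell-1)+k\,\ech(\ell-1)$ and then solves it; the algebra you flag as ``slightly tedious'' does close with the constant $1/6$ (e.g.\ under $k<\ell^{3/2}/6$ one gets $k(k+1-(\ell-2)^{3/2}/6)\leq (\ell^{3/2}/6)(\sqrt{\ell}/2+1)=\ell^2/12+\ell^{3/2}/6<\ell(\ell-1)$ for $\ell\geq 3$), so this is not a gap.
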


\begin{proof}
	In order to prove Lemma \ref{ind}, we use induction on $\ell$. For $\ell = 1, 2$ we have: 

\[\frac{\ell \sqrt{\ell}}{6} \leq \frac{2\sqrt{2}}{6} < 1 \leq \ech(\ell+1).\]

Now, suppose that the statement holds for every $\ell' < \ell$. Our goal is to prove the claim for $\ell$.
As a contradiction, suppose 
\begin{equation}
	\ech(\ell +1)<\ell \sqrt{\ell}/6.
\end{equation}
This means that there exists a graph $G \in \partomega{\ell+1}$ and a vertex $\vertex \in  G$, such that if we define $\reachableset$ as the set of the vertices in $G$ with a rainbow path to $\vertex$, we have
\begin{equation}
	|\reachableset| < \ell \sqrt{\ell}/6.\label{eq:contradiction_v_lsqrtl}
\end{equation}
Suppose that $\{\ppart_1,\ppart_2,\ldots,\ppart_{\ell+1}\}$ is the set of parts in $G$ and suppose \mywlog~that $\vertex \in \ppart_{\ell+1}$.

\begin{claim}\label{lmmm}
	Fix a vertex $u$, and define $P$ as the set of all rainbow paths with length at most $2$ from $u$ to $\vertex$. Also, let $\rhat$ be the set of all different vertices that have an incoming edge from $u$ and belong to a path in $P$. We have $|\rhat|\leq 2\sqrt{\ell}/3$.
\end{claim}

\begin{subproof}[of Claim \ref{lmmm}]
		Suppose $|\rhat|> 2\sqrt{\ell}/3$ and assume \mywlog~that $u\in \ppart_{\ell}$.
	Consider an arbitrary path $p\in P$ and \mywlog~suppose that $p$ does not pass $\{\ppart_1,\ppart_2,\ldots,\ppart_{\ell-2}\}$ (recall that the length of $P$ is at most 2).
	Now, define $G'=\graph[\{\ppart_1,\ppart_2,\ldots,\ppart_{\ell-2}, \ppart_{\ell}\}]$. Therefore, $G'$ contains $\ell-1$ parts. By the induction hypothesis, we know at least $(\ell-2)\sqrt{\ell-2}/6$ of the vertices in $G'$ have a rainbow path to $u$. Let $S$ be the set of these vertices. Note that all the vertices in $S$ have also a rainbow path to $\vertex$: consider their rainbow path to $u$, then go to $\vertex$ by path $p$. These two paths do not intersect because path $p$ does not pass $\{\ppart_1,\ppart_2,\ldots,\ppart_{\ell-2}\}$. Furthermore, none of these vertices belong to $\rhat$; otherwise, since $u$ has an outgoing edge to the vertices in $\rhat$ we have a rainbow cycle (see Figure \ref{fig:lemma4-3}). Therefore, the number of vertices that have a rainbow path to $v$ is at least
	\begin{figure}
		\centering
		\includegraphics[scale=0.5]{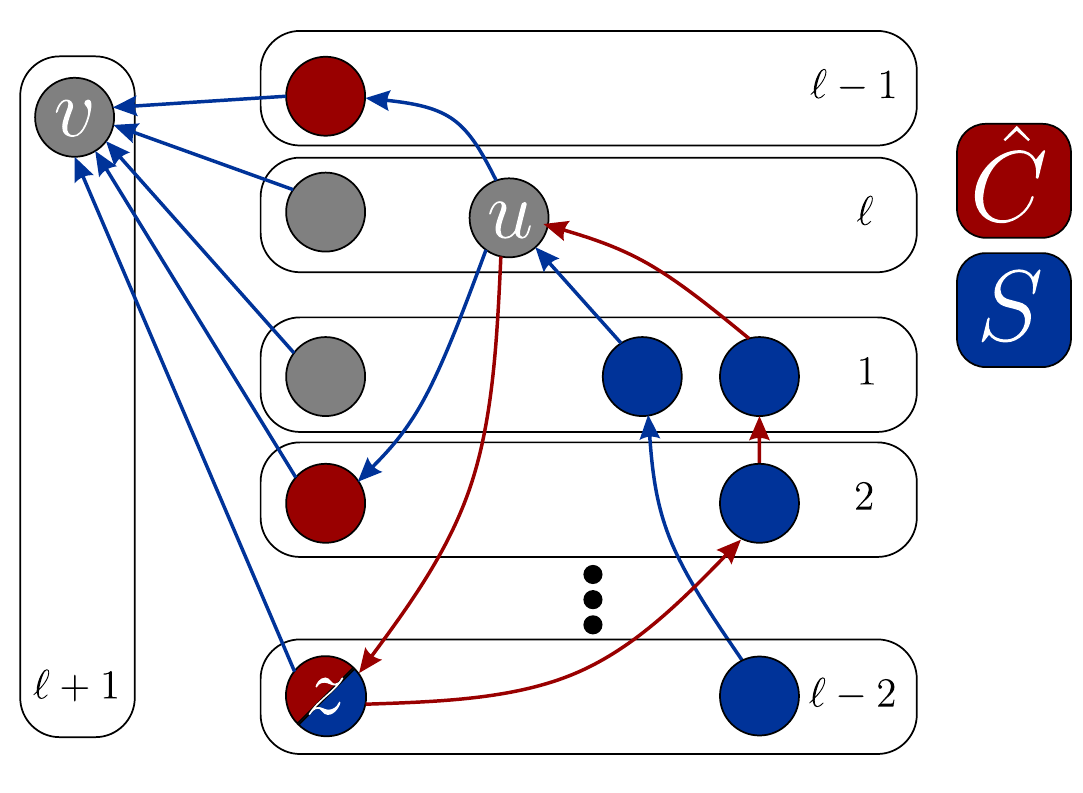}
		\caption{In this figure, sets $\hat{C}$ and $S$ as defined in the proof of Claim \ref{lmmm} are shown. One can observe that if a vertex $z$ belongs to both $\rhat$ and $S$, we have a rainbow cycle.} \label{fig:lemma4-3}
	\end{figure}
	\[\begin{aligned}
		|\reachableset| &\geq |\rhat| + |S| \nonumber\\
		&\geq |\rhat| + \ech(\ell-1) \nonumber\\
		&\geq \frac{2\sqrt{\ell}}{3} + \frac{(\ell-2)\sqrt{\ell-2}}{6} \nonumber\\
		&= \frac{2\sqrt{\ell}}{3} + \frac{\ell\sqrt{\ell-2}}{6} - \frac{\sqrt {\ell-2}}{3} \nonumber\\
		&\geq \frac{\sqrt{\ell}}{3} + \frac{\sqrt\ell\sqrt{\ell^2-2\ell}}{6} \nonumber\\
		&\geq \frac{\sqrt{\ell}}{3} + \frac{\sqrt\ell(\sqrt{\ell^2-2\ell+1}-1)}{6} \nonumber\\
		&= \frac{\sqrt{\ell}}{3} + \frac{\sqrt\ell(\ell-2)}{6}\nonumber\\
		&\geq \frac{\ell \sqrt{\ell}}{6}\nonumber,
	\end{aligned}\]
	which contradicts Inequality (\ref{eq:contradiction_v_lsqrtl}). This completes the proof of Claim \ref{lmmm}.
\end{subproof}

By Inequality (\ref{eq:contradiction_v_lsqrtl}), we know $|\reachableset| < \ell \sqrt{\ell}/6$. The vertices in $\reachableset$ belong to parts $\ppart_{1},\ppart_2,\ldots,\ppart_\ell$. Therefore, at least one of these parts contributes less than $\sqrt{\ell}/6$ vertices to $\reachableset$. Suppose \mywlog~that $\ppart_\ell$ is one of such parts, i.e., $|\ppart_\ell \cap \reachableset| < \sqrt{\ell}/6$.  Since $G \in \partomega{\ell+1}$, each part other than $\ppart_{\ell+1}$ has a vertex with an outgoing edge to $\vertex$. For each part $\ppart_i$ ($i\leq \ell-1$), we denote this vertex by $\vertex_i$.  Also, note that each vertex $\vertex_i$ has an incoming edge from $\ppart_\ell$. Since $\vertex_i$ has an outgoing edge to $\vertex$, any vertex in $\ppart_\ell$ that has an outgoing edge to $\vertex_i$ has a rainbow path of length $2$ to $\vertex$ and thus belongs to $\reachableset$. Hence, at least one of the vertices in $\ppart_\ell \cap \reachableset$ has outgoing edges to at least
\begin{equation}
	\frac{\ell-1}{\sqrt{\ell}/6} = \frac{6(\ell-1)}{\sqrt{\ell}}
\end{equation}
of the vertices in $\{\vertex_1,\vertex_2,\ldots,\vertex_{\ell-1}\}$. On the other hand, by Claim \ref{lmmm}, we know that each vertex in $\ppart_{\ell}$ has at most  $2\sqrt{\ell}/3$ outgoing edges to  $\{\vertex_1,\vertex_2,\ldots,\vertex_{\ell-1}\}$. Thus, we have
\[	\frac{6(\ell-1)}{\sqrt{\ell}} \leq \frac{2\sqrt{\ell}}{3},\]
which means 

\[{18(\ell-1)} \leq {2\ell},\]
that is, $\ell \leq 16/18$. But this contradicts the fact that  $\ell>2$. 	
 \end{proof}

\begin{corollary}[of Lemma \ref{ind}]
	By choosing $\beta = 0.5$ and $\gamma = 0$  in Lemma \ref{MTH2}, we have $\rainbow(d) \in O(d^2)$.
\end{corollary}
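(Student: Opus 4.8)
The plan is to simply combine Lemma~\ref{ind} with Lemma~\ref{MTH2}. The first step is to rephrase Lemma~\ref{ind} in the asymptotic form required as the hypothesis of Lemma~\ref{MTH2}. Lemma~\ref{ind} states that $\ech(\ell+1) \geq \ell\sqrt{\ell}/6$ for every $\ell \geq 1$, so re-indexing with $\ell' = \ell+1$, for every $\ell' \geq 2$ we have $\ech(\ell') \geq (\ell'-1)^{3/2}/6$. Since $(\ell'-1)^{3/2}/(\ell')^{3/2} \to 1$ as $\ell' \to \infty$, there is a constant $a>0$ and a threshold $\ell_0$ with $\ech(\ell') \geq a \cdot (\ell')^{3/2}$ for all $\ell' \geq \ell_0$; that is, $\ech(\ell) \in \Omega(\ell^{3/2}) = \Omega(\ell^{1 + 1/2}\log^{0}\ell)$.

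The second step is to invoke Lemma~\ref{MTH2} with the parameters $\beta = 1/2$ and $\gamma = 0$. The hypothesis $\ech(\ell) \in \Omega(\ell^{1+\beta}\log^{\gamma}\ell)$ is exactly what was established in the previous paragraph, so the conclusion gives $\rainbow(d) \in O\!\left(d^{1/\beta}\log^{-\gamma/\beta} d\right) = O\!\left(d^{2}\log^{0} d\right) = O(d^2)$, as claimed.

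There is essentially no obstacle here; the only point requiring a moment's care is the re-indexing and the verification that the polynomial lower bound from Lemma~\ref{ind}, stated for all $\ell \geq 1$ with an explicit constant $1/6$, genuinely lands in the big-$\Omega$ class $\Omega(\ell^{3/2})$ needed to feed Lemma~\ref{MTH2}. Once that is observed, the corollary is an immediate substitution.
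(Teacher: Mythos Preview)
Your proposal is correct and follows exactly the approach intended by the paper: the corollary is stated without a separate proof there, and your write-up simply spells out the implicit re-indexing of Lemma~\ref{ind} to obtain $\ech(\ell)\in\Omega(\ell^{3/2})$ and then plugs $\beta=1/2$, $\gamma=0$ into Lemma~\ref{MTH2}.
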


Now, we are ready to prove our main result. In Theorem \ref{MTH}, we show that $\ech(\ell+1)  \in \Omega(\ell^2/\log \ell)$. The structure of the proof of Theorem \ref{MTH} is similar to the proof of Lemma \ref{ind}. The difference is that here we generalize Claim \ref{lmmm} to consider paths with length more than 2. 
\begin{theorem}\label{MTH}
	For every $\ell \geq 3$, we have $\ech(\ell +1) \geq \ell^{2}/20\ln\ell$.
\end{theorem}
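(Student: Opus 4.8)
The plan is to mirror the structure of the proof of Lemma~\ref{ind}, but to track rainbow paths of \emph{all} lengths rather than just length at most $2$, and to set up an induction over $\ell$ with target bound $\ech(\ell+1) \geq \ell^2/20\ln\ell$. Assume for contradiction that there is a graph $\graph \in \partomega{\ell+1}$ with parts $\{\ppart_1,\dots,\ppart_{\ell+1}\}$ and a vertex $\vertex \in \ppart_{\ell+1}$ such that the set $\reachableset$ of vertices with a rainbow path to $\vertex$ satisfies $|\reachableset| < \ell^2/20\ln\ell$. As before, some part among $\ppart_1,\dots,\ppart_\ell$ contributes fewer than $\ell/20\ln\ell$ vertices to $\reachableset$; say this is $\ppart_\ell$, so $|\ppart_\ell \cap \reachableset| < \ell/20\ln\ell$. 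For each $i \leq \ell-1$ let $\vertex_i \in \ppart_i$ be a vertex with an outgoing edge to $\vertex$; each $\vertex_i$ receives an incoming edge from $\ppart_\ell$, so every vertex of $\ppart_\ell$ with an outgoing edge to some $\vertex_i$ lies in $\reachableset$ (length-$2$ path to $\vertex$). By averaging, some $\vertexx \in \ppart_\ell \cap \reachableset$ sends edges to at least $\frac{\ell-1}{\ell/20\ln\ell} \approx 20\ln\ell$ of the $\vertex_i$'s — call this neighborhood $\rhat$.

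The key is a generalized version of Claim~\ref{lmmm}: bounding how large $\rhat$ can be, i.e. to how many of the $\vertex_i$'s a single vertex $\vertexx$ can have outgoing edges. Fix $\vertexx \in \ppart_\ell$ and the set $\rhat \subseteq \{\vertex_1,\dots,\vertex_{\ell-1}\}$ of its out-neighbors among these; write $k = |\rhat|$ and \mywlog{} $\rhat = \{\vertex_{\ell-1}, \vertex_{\ell-2}, \dots, \vertex_{\ell-k}\}$. For each $j$ with $\ell-k \le j \le \ell-1$, consider the induced subgraph $\graph[\{\ppart_1,\dots,\ppart_{\ell}\}\setminus\{\ppart_j, \ppart_{?}\}]$ — more precisely, we want, for many choices of $j$, a large set $S_j$ of vertices with a rainbow path to $\vertexx$ that avoids $\ppart_j$ (and $\ppart_{\ell+1}$ and $\ppart_{\ell}$), because such vertices then reach $\vertex$ via $\vertexx \to \vertex_j \to \vertex$, and — crucially — they cannot lie in $\rhat$, else we get a rainbow cycle through $\vertexx$. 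The obstacle is that these sets $S_j$ for different $j$ may overlap heavily, so we cannot just sum $\sum_j \ech(\ell-2)$. I would instead argue that the \emph{union} $\bigcup_j S_j$ is large: each $S_j$ has size at least $\ech(\ell-1) \ge (\ell-2)^2/20\ln(\ell-2)$ by the induction hypothesis applied to an $(\ell-1)$-partite induced subgraph, and by an inclusion–exclusion / double-counting argument (each vertex of $\ppart$ sits in at most one "forbidden" part, so it is excluded from at most one of roughly $k$ sets $S_j$), the union has size at least $\approx \ech(\ell-1)$ still, plus the $k$ vertices of $\rhat$ themselves, none of which is in any $S_j$. This yields $|\reachableset| \ge \ech(\ell-1) + k$, and since $k \gtrsim 20\ln\ell$ the slack between $\ech(\ell-1) \ge (\ell-2)^2/20\ln(\ell-2)$ and the target $\ell^2/20\ln\ell$ must be absorbed — one checks that $(\ell-2)^2/20\ln(\ell-2) + c\ln\ell \ge \ell^2/20\ln\ell$ for a suitable constant via the estimate $\ell^2 - (\ell-2)^2 = 4\ell - 4$ and $\ln(\ell-2) \le \ln\ell$, giving the contradiction.

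I expect the main obstacle to be exactly the overlap issue among the sets of deep rainbow paths: the clean per-part averaging of Lemma~\ref{ind} worked because length-$2$ paths are rigid, whereas here I must carefully choose which two parts to delete so that (a) the induction hypothesis applies to an induced subgraph with the right number of parts in $\partomega{\cdot}$, (b) the resulting rainbow paths to $\vertexx$ genuinely extend to rainbow paths to $\vertex$ without repeating a part, and (c) I can lower-bound the size of the union rather than a single term. Managing the bookkeeping of deleted parts ($\ppart_j$, $\ppart_\ell$, $\ppart_{\ell+1}$, and whatever part a given path would otherwise revisit) while keeping the count of available parts at $\ell-1$ is where the argument has to be done with care; the final numerical verification that the recursion $\ech(\ell+1) \ge \ech(\ell-1) + \Omega(\log\ell)$ closes up to $\ell^2/20\ln\ell$ is then routine calculus, using that $\ln\ell$ grows slowly enough that the $4\ell$ gain per two steps dominates.
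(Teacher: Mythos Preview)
Your recursion does not close. You obtain $k=|\rhat|\ge (\ell-1)/|\ppart_\ell\cap\reachableset| > (\ell-1)\cdot\frac{20\ln\ell}{\ell}\approx 20\ln\ell$, and then claim that $\ech(\ell-1)+k\ge \ell^2/20\ln\ell$ follows by ``routine calculus.'' But the gap you must cover is
\[
\frac{\ell^2}{20\ln\ell}-\frac{(\ell-2)^2}{20\ln(\ell-2)}\;\ge\;\frac{\ell^2-(\ell-2)^2}{20\ln\ell}\;=\;\frac{4\ell-4}{20\ln\ell}\;=\;\Theta\!\Big(\frac{\ell}{\ln\ell}\Big),
\]
which is far larger than the $\Theta(\ln\ell)$ you get from $k$. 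Equivalently, a recursion of the form $\ech(\ell+1)\ge\ech(\ell-1)+\Theta(\ln\ell)$ telescopes only to $\ech(\ell)=\Theta(\ell\ln\ell)$, not $\Theta(\ell^2/\ln\ell)$. The union argument over the sets $S_j$ cannot save you either: even the full union is at best the set of vertices with a rainbow path to $\vertexx$ in the $\ell$-partite graph $\graph[\{\ppart_1,\dots,\ppart_\ell\}]$, and plugging in $\ech(\ell)$ instead of $\ech(\ell-1)$ still leaves a one-step gain of only $\Theta(\ln\ell)$.

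The paper's proof is structurally different from a one-shot recursion. It first proves (Claim~\ref{lmm}) an \emph{upper} bound on the out-neighborhood of any vertex $u$ along rainbow paths of length at most $k$ to $\vertex$: at most $\ell k/4\ln\ell$ such neighbors. It then runs an \emph{inner} induction on the path length $k$ (Claim~\ref{clm4}): if in each of $\ell-t+k-1$ parts at least $t^{k-2}/(k-2)!$ vertices reach $\vertex$ by length-$(k-1)$ paths, then averaging against the upper bound of Claim~\ref{lmm} forces at least $t^{k-1}/(k-1)!$ vertices of each remaining part to reach $\vertex$ by length-$k$ paths. Iterating up to $k=t=\lfloor\ln\ell\rfloor$ and applying Stirling gives $t^{t-1}/(t-1)!\ge \ell/20\ln\ell$ vertices \emph{per part}, hence $\ell^2/20\ln\ell$ in total. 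The missing idea in your sketch is precisely this amplification over path lengths up to $\ln\ell$; a single averaging step over length-$2$ paths is what gave Lemma~\ref{ind} only $\Omega(\ell^{3/2})$, and your variant does not improve on that order of magnitude.
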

\begin{proof} 
	We use induction on $\ell$. For $\ell = 3, 4$ we have: 
\[\frac{\ell^{2}}{20\ln\ell} < 1 \leq \ell \leq \ech(\ell+1).\]
%TODO: Nabayad inja begim chera \ell \leq \ech(\ell+1)?
Now, suppose that for some $\ell \geq 5$ we know that the statement of Theorem \ref{MTH} holds for every $3 \leq\ell' < \ell$ and our goal is to prove the claim for $\ell$.
As a contradiction, suppose 
\begin{equation}
	\ech(\ell +1)<\frac{\ell^{2}}{20\ln\ell}. \label{eq:contradiction-l2}
\end{equation}
This means that there exists a graph $G \in \partomega{\ell+1}$ and a vertex $\vertex \in  G$, such that exactly $\ech(\ell +1)$ of the vertices in $G$ have a rainbow path to $\vertex$, which is less than $\ell^{2}/20\ln\ell$. Suppose that $\{\ppart_1,\ppart_2,\ldots,\ppart_{\ell+1}\}$ is the set of parts in $G$ and suppose \mywlog~that $\vertex \in \ppart_{\ell+1}$. We start by proving Claim \ref{lmm}. Claim \ref{lmm} plays a similar role as Claim \ref{lmmm}. The main difference is that in Claim \ref{lmm}, we consider paths with length more than  2.

\begin{restatable}{claim}{secondclaim}\label{lmm}
	Fix a vertex $u$ and an integer  $k\leq \ell-3$, and define $P_k$ as the set of all rainbow paths with length at most $k$ from $u$ to $\vertex$. Also, let $\rhat$ be the set of all different vertices that have an incoming edge from $u$ in the paths of $P_k$. Then, $|\rhat|\leq \ell k/4\ln\ell$.
\end{restatable}

	\begin{subproof}[of Claim \ref{lmm}]
			As a contradiction, suppose $|\rhat|\leq\ell k/4\ln\ell$. \Wlog, we can assume that $u\in \ppart_{\ell}$. 
		Consider an arbitrary path $p\in P_k$. Since the length of $p$ is at most $k$, we can suppose \mywlog~that $p$ does not pass through $\{\ppart_1,\ppart_2,\ldots,\ppart_{\ell-k}\}$. 
		Define $S$ as the set of all vertices in  $\graph[\{\ppart_1,\ppart_2,\ldots,\ppart_{\ell-k},\ppart_\ell\}]$ that have a rainbow path to $u$. By definition, we know that $|S|\geq \ech(\ell - k+1)$.
		%We know that in the induced subgraph $\{\ppart_1,\ppart_2,\ldots,\ppart_{\ell-k},\ppart_\ell\}$, $u$ have rainbow path to at least $\ech(\ell - k)$ vertices. Define $Y$ as the set of these vertices  ($|Y|\geq \ech(\ell - k)$).
		\begin{figure}
			\centering
			\includegraphics[scale=0.5]{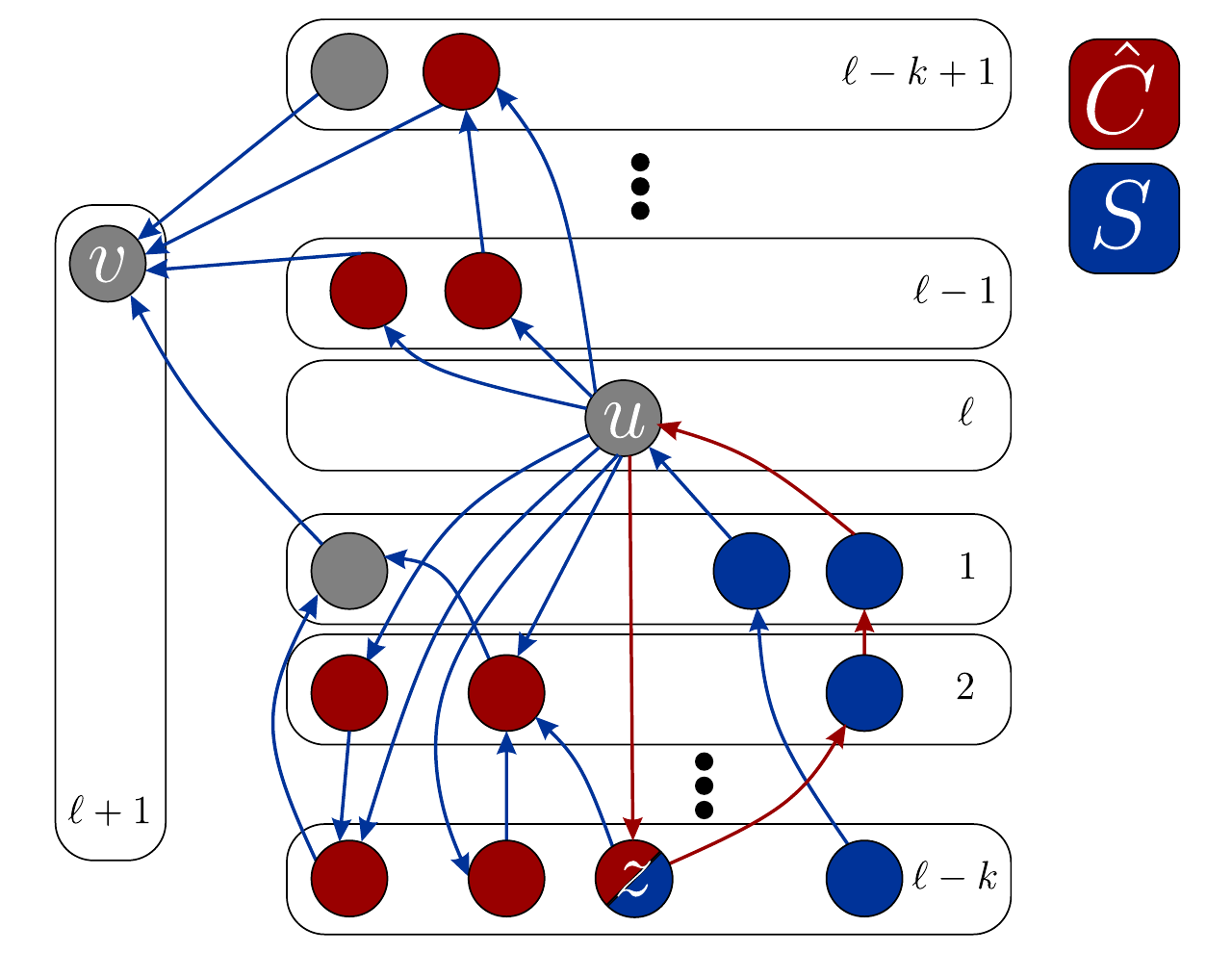}
			\caption{In this figure, two main sets defined in the proof of Claim \ref{lmm} are shown. Furthermore, one can notice that if a vertex $z$ is in both of the sets $\rhat$ and $S$, there is a rainbow cycle like the rainbow cycle depicted in the figure with red edges.} \label{fig:lemma4-6}
		\end{figure}
		Note that $\vertex$, as well as $u$, can be reached by rainbow paths from all the vertices in $S$ since $u$ can be reached by rainbow paths from all of the vertices in $S$ through $\{\ppart_1,\ppart_2,\ldots,\ppart_{\ell-k}\}$, and furthermore, $u$ has a rainbow path to $\vertex$ that does not pass any vertex in parts $\{\ppart_1,\ppart_2,\ldots,\ppart_{\ell-k}\}$. Therefore:
		\[\ech(\ell+1)\geq |\rhat\cup S|.\]
		On the other hand, we have $\rhat\cap S=\emptyset$; otherwise, if there exists a vertex $z \in \rhat\cap S$, we know that $z$ has an incoming edge from $u$ (since $z\in \rhat$) and also $z$ has a rainbow path to $u$ (since $z\in S$), which yields  a rainbow cycle (see Figure \ref{fig:lemma4-6}). Therefore:
		\[\ech(\ell+1)\geq |\rhat\cup S| = |\rhat| + |S| \geq \ech(\ell - k +1) + \frac{\ell k}{4\ln\ell}.\]
		Since $\ell-k \geq 3$,  by  induction hypothesis, we have $ \ech(\ell - k+1)\geq (\ell - k)^2/20\ln(\ell-k)$.
		Thus:
		\[\begin{aligned}
			\ech(\ell+1)&\geq \frac{(\ell - k)^2}{20\ln(\ell-k)} + \frac{\ell k}{4\ln\ell}\\
			&\geq\frac{1}{20\ln\ell}((\ell - k)^2 + 2\ell k)\\
			&\geq\frac{\ell^2}{20\ln\ell},
		\end{aligned}\]
		which contradicts Inequality (\ref{eq:contradiction-l2}). This completes the proof.
	\end{subproof}
	
%	Define $t = \left \lfloor{\ln\ell}\right \rfloor$. 
	\begin{claim}
		\label{clm4}
		Fix an integer $t\leq \ln \ell$. For every  $k\leq t$ and subset $W$ of $\{\ppart_1,\ppart_2,\ldots,\ppart_\ell\}$ with $\ell - t +k$ parts, 
		at least $t^{k-1}/ (k-1)!$ vertices of each part in $W$ have rainbow paths with lengths at most $k$ to $\vertex$ in $G[W\cup \{\ppart_{\ell+1}\}]$.
	\end{claim}

	\begin{subproof}[of Claim \ref{clm4}]
	We prove this claim by induction on $k$. For the base case $k=1$, it is sufficient to show that each part has at least $t^{k-1}/ (k-1)!=1$ vertex which has an outgoing edge to $\vertex$. This is trivial since $\graph \in \partomega{\ell+1}$.
	
	Now, suppose that the statement holds for every $k' < k$. We prove the claim for $k$.  As a contradiction suppose that for set $ W = \{\ppart_1,\ppart_2,\ldots,\ppart_{\ell-t+k}\}$ and part $\ppart_{\ell-t+k}$ we have that less than $t^{k-1}/ (k-1)!$ of the vertices  of $\ppart_{\ell-t+k}$ have a rainbow path with length at most $k$ to $\vertex$ in $\graph[W\cup \ppart_{\ell+1}]$. Define $Q$ as the set of these vertices. We have:
	\begin{equation}\label{cllllm}
		|Q|<\frac{t^{k-1}}{(k-1)!}. 
	\end{equation} 
	%		\Wlog suppose that this pa $V_{\ell-t+k}$ and $V^*= \{\ppart_1,\ppart_2,\ldots,\ppart_{\ell-t+k}\}$.
	By induction hypothesis, in $\graph[\{\ppart_1,\ppart_2,\ldots,\ppart_{\ell-t+k-1}\} \cup \{\ppart_{\ell+1}\}]$, at least 
	$t^{k-2}/ (k-2)!$ of the vertices of each part in $\{\ppart_1,\ppart_2,\ldots,\ppart_{\ell-t+k-1}\}$ have a rainbow path with length at most $k-1$ to $\vertex$. Hence, in total, in $\graph[\{\ppart_1,\ppart_2,\ldots,\ppart_{\ell-t+k-1}\} \cup \{\ppart_{\ell+1}\}]$, at least 
	$(\ell-t+k-1)(t^{k-2}/ (k-2)!)$ of the vertices have  a rainbow path with length at most $k-1$ to $\vertex$. Define $\reachableset^{k-1}$ as the set of these vertices. 
	
	Note that each vertex in $\reachableset^{k-1}$ has an incoming edge from $\ppart_{\ell-t+k}$.  Define $T$ as the set of vertices in $\ppart_{\ell-t+k}$ that have outgoing edges to vertices in $\reachableset^{k-1}$. Since each vertex in $\reachableset^{k-1}$ has a rainbow path with length at most $k-1$ to $\vertex$,  each vertex in $T$ has a rainbow path of length at most $k$ to $\vertex$. Therefore, we have $T\subseteq Q$. As a consequence, Inequality (\ref{cllllm}) implies that $|T|<t^{k-1}/ (k-1)!$. Moreover, there is a vertex $\vertexx\in T$ that has at least 
	\[\frac{|\reachableset^{k-1}|}{|T|} \geq (\ell-t+k-1)\frac{\frac{t^{k-2}}{(k-2)!}}{\frac{t^{k-1}}{(k-1)!}}\]
	outgoing edges to $\reachableset^{k-1}$. By Claim \ref{lmm}, we can conclude that $\vertexx$ has less than $\ell k/4\ln\ell$ outgoing edges to $\reachableset^{k-1}$ \footnote{Since $k\leq t\leq \ln\ell \leq \ell-3$, the conditions of Claim \ref{lmm} hold.}(see Figure \ref{fig:clm4} for an illustration).  Therefore:
	\begin{figure}
		\centering
		\includegraphics[scale=0.5]{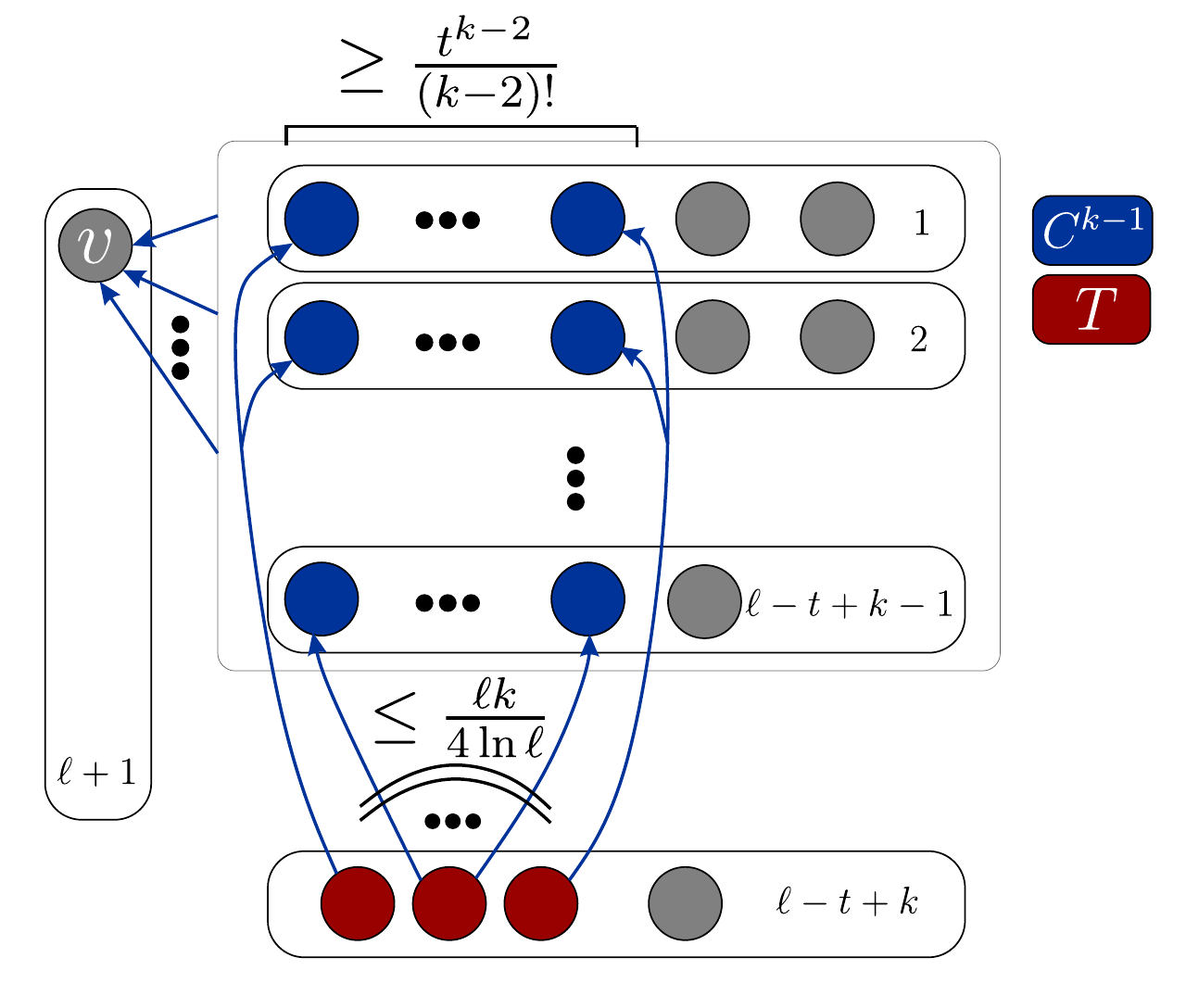}
		\caption{In this figure, you can see an illustration of sets $\reachableset^{k-1}$ and $T$ and the bounds on the size of $\reachableset^{k-1}$ in each part and the number of outgoing edges from $T$ to $\reachableset^{k-1}$ for each vertex. } \label{fig:clm4}
		%\caption{In this figure, two main sets used in the proof of Claim \ref{clm4} are shown. Moreover, by the induction hypothesis, we know that each part has at least $t^{k-2}/(k-2)!$ vertices that have a rainbow path to $\vertex$. Also, each vertex in $\ppart_{\ell-t+k}$ has out degree to these vertices at most $\ell k/4\ln \ell$ by Claim \ref{lmm}. As a result, because each blue vertex must have an incoming edge from red vertices, this gives our desired lower bound of the claim on the number of red vertices.} \label{fig:clm4}
	\end{figure}
	
	\[\begin{aligned}
		\frac{\ell k}{4\ln\ell}&\geq\frac{(\ell-t+k-1)(\frac{t^{k-2}}{(k-2)!})}
		{(\frac{t^{k-1}}{(k-1)!})}\\
		&=\frac{(\ell-t+k-1)(k-1)}{t} \\
		&\geq\frac{(\ell-t)(k/2)}{t}&k\geq 2\\
		&>\frac{(\ell/2)(k/2)}{t}&\ell\geq 5, t\leq\ln\ell\\
		&\geq\frac{\ell k}{4\ln\ell}&t\leq\ln\ell,
	\end{aligned}\]
	which is a contradiction. This completes the proof of Claim \ref{clm4}.
	\end{subproof}

	Note that by the pigeonhole principle, there exists a part $\ppart_i$ that contains at most $\ech(\ell +1)/\ell$ vertices with a rainbow path to $\vertex$. By setting $k=t$ in Claim \ref{clm4}, we have:
	\begin{align*}
		\frac{\ech(\ell +1)}{\ell}&\geq \frac{t^{t-1}}{(t-1)!} \\
		&\geq \frac{t^{t-1}}{(\frac{t-1}{e})^{t-1}\sqrt{2\pi(t-1)}\cdot e^{\frac{1}{12(t-	1)}}} &\mbox{Inequality } \eqref{eq:stirling}\\
		&\geq\frac{e^{t-1}}{e^{\frac{1}{12t}}\sqrt{2\pi t}} =\frac{e^{t+1}}{e^{2+\frac{1}{12t}}\sqrt{2\pi t}}.
		%	&\geq\frac{e^{\ln\ell}}{e^{2}\sqrt{2\pi t}} \\
	\end{align*}
If we choose $t = \left \lfloor{\ln\ell}\right \rfloor$, we have\footnote{Since $\left \lfloor{\ln\ell}\right \rfloor\leq \ln\ell$, the conditions of Claim \ref{clm4} hold.}:
\begin{align*}
	\frac{e^{t+1}}{e^{2+\frac{1}{12t}}\sqrt{2\pi t}} 
	&\geq\frac{\ell}{e^{2+\frac{1}{12t}}\sqrt{2\pi t}}&t+1\geq\ln\ell  \\
	&\geq\frac{\ell}{e^{2+\frac{1}{12}}\sqrt{2\pi \ln\ell}}&1\leq t \leq\ln\ell\\
	&\geq\frac{\ell}{21\sqrt{\ln\ell}} \\
	&\geq\frac{\ell}{20\ln\ell} &\ell\geq5,
\end{align*}
	which contradicts Inequality \eqref{eq:contradiction-l2}. This Completes the proof of Theorem \ref{MTH}.
\end{proof}
\begin{corollary}[of Theorem \ref{MTH}] \label{cor:thm:MTH}
	By choosing $\beta = 1$ and $\gamma = 1$  in Lemma \ref{MTH2}, we have $\rainbow(d) \in O(d\log d)$.
\end{corollary}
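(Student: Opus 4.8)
\medskip\noindent\textbf{Plan of proof.} The corollary is a direct application of Lemma~\ref{MTH2}, so the plan is just to verify that the lower bound from Theorem~\ref{MTH} fits the hypothesis of that lemma. First I would restate Theorem~\ref{MTH}, which gives $\ech(\ell+1)\ge \ell^{2}/20\ln\ell$ for all $\ell\ge 3$, as a bound on $\ech(\ell)$ itself by substituting $\ell\mapsto\ell-1$: this yields $\ech(\ell)\ge (\ell-1)^{2}/\bigl(20\ln(\ell-1)\bigr)$ for every $\ell\ge 4$. Using $(\ell-1)^{2}\ge \ell^{2}/4$ and $\ln(\ell-1)\le\ln\ell$, I would then conclude $\ech(\ell)\ge \ell^{2}/(80\ln\ell)$ for all $\ell\ge 4$, hence $\ech(\ell)\in\Omega(\ell^{2}/\log\ell)$.

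Next I would observe that $\Omega(\ell^{2}/\log\ell)$ is of the form $\Omega(\ell^{1+\beta}\log^{\gamma}\ell)$ with $\beta=1$ (and a single logarithm in the denominator, i.e. $\gamma=-1$); concretely, the constant $a=1/80$ and threshold $\ell_{0}=4$ play the roles of $a,\ell_{0}$ in the proof of Lemma~\ref{MTH2}. Plugging these into Lemma~\ref{MTH2} gives $\rainbow(d)\in O\bigl(d^{1/\beta}\log^{-\gamma/\beta}d\bigr)=O(d\log d)$, which is the claim; the logarithm re-appears with exponent $+1$ in the upper bound on $\rainbow(d)$ precisely because $\gamma$ enters the conclusion of Lemma~\ref{MTH2} with a flipped sign.

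I do not expect any genuine obstacle here: the only care needed is the re-indexing from $\ech(\ell+1)$ to $\ech(\ell)$ and tracking the multiplicative constant and the threshold $\ell_{0}$ when passing from the clean inequality of Theorem~\ref{MTH} to the asymptotic hypothesis of Lemma~\ref{MTH2}. Finally, chaining this bound with Theorem~\ref{thm:rainbow2EFX} (taking $\beta=1$, $\gamma=1$ in that theorem) yields the promised consequence: a $(1-\varepsilon)$-$\efx$ allocation discarding at most $O_{\varepsilon}(\sqrt{n\log n})$ goods.
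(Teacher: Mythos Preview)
Your proposal is correct and follows exactly the intended route: verify that Theorem~\ref{MTH} gives $\ech(\ell)\in\Omega(\ell^{2}/\log\ell)$ and then invoke Lemma~\ref{MTH2}. You are in fact more careful than the paper, which states the corollary without proof: you correctly note that in the parametrization of Lemma~\ref{MTH2} one must take $\gamma=-1$ (since the logarithm sits in the denominator of the lower bound on $\ech$), and that the sign flip $-\gamma/\beta$ in the conclusion is what produces the $\log d$ factor in $O(d\log d)$; the ``$\gamma=1$'' in the corollary's wording is a slight abuse referring to the exponent appearing in the final bound on $\rainbow(d)$ (equivalently, the parameter in Theorem~\ref{thm:rainbow2EFX}) rather than the $\gamma$ of Lemma~\ref{MTH2}.
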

By Corollary \ref{cor:thm:MTH}, we have the upper bound  of $O(d\log d)$ on $\rainbow(d)$. Using this upper bound in Theorem \ref{thm:rainbow2EFX} we obtain a new  upper bound on the number of discarded goods in $\efx$ allocations.
\begin{corollary}
	By choosing $\beta = 1$ and $\gamma = 1$  in Theorem \ref{thm:rainbow2EFX}, For every constant $\varepsilon \in (0,1/2]$, we can find a $(1-\varepsilon)$-$\efx$ allocation with $O_{\epsilon}(\sqrt{n \log n})$ number of discarded goods.
\end{corollary}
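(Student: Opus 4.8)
The plan is to simply chain together two results already in hand. Corollary~\ref{cor:thm:MTH} shows that $\rainbow(d) \in O(d\log d)$, i.e.\ there exist constants $\beta = 1$ and $\gamma = 1$ for which $\rainbow(d) \in O(d^{\beta}\log^{\gamma} d)$. This is precisely the hypothesis needed to apply Theorem~\ref{thm:rainbow2EFX}, the reduction of \cite{chaudhury2021improving} from upper bounds on the rainbow cycle number to upper bounds on the number of discarded goods in $(1-\varepsilon)$-$\efx$ allocations.

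Concretely, I would first fix an arbitrary constant $\varepsilon \in (0,1/2]$ and invoke Theorem~\ref{thm:rainbow2EFX} with $\beta = \gamma = 1$. It guarantees a $(1-\varepsilon)$-$\efx$ allocation discarding $O_{\epsilon}\bigl(n^{\frac{\beta}{\beta+1}}\log^{\frac{\gamma}{\beta+1}} n\bigr)$ goods. The only remaining step is to evaluate the exponents: $\frac{\beta}{\beta+1} = \frac{1}{2}$ and $\frac{\gamma}{\beta+1} = \frac{1}{2}$, so the bound becomes $O_{\epsilon}\bigl(n^{1/2}\log^{1/2} n\bigr) = O_{\epsilon}(\sqrt{n\log n})$, which is exactly the claimed statement.

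I do not expect any real obstacle here: all the substance is carried by Theorem~\ref{MTH} (the lower bound $\ech(\ell+1) \in \Omega(\ell^{2}/\ln\ell)$), by Lemma~\ref{MTH2} (which converts it into $\rainbow(d)\in O(d\log d)$), and by the previously established Theorem~\ref{thm:rainbow2EFX}; the corollary is a bookkeeping step that only requires the trivial arithmetic of the two exponents and the observation that, $\varepsilon$ being constant, the $O_{\epsilon}(\cdot)$ notation absorbs all $\varepsilon$-dependence. If anything, the one point worth stating explicitly in the write-up is that the constants $\beta=\gamma=1$ produced by Corollary~\ref{cor:thm:MTH} are admissible in Theorem~\ref{thm:rainbow2EFX} (which only asks for the existence of \emph{some} such constants), so the two results compose without further conditions.
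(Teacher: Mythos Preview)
Your proposal is correct and matches the paper's approach exactly: the corollary is stated without proof because it is an immediate substitution of $\beta=\gamma=1$ (justified by Corollary~\ref{cor:thm:MTH}) into Theorem~\ref{thm:rainbow2EFX}, and you have carried out precisely that bookkeeping.
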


\section{Permutation Rainbow Cycle}
In this section, we consider the Permutation Rainbow Cycle problem. For an integer $d>0$, define $\bothpi{\ell}{d}$, $\dpi{d}$, and $\partpi{\ell}$ respectively as subsets of $\bothomega{\ell}{d}$, $\domega{d}$, and $\partomega{\ell}$  consisting all graphs $\graph$ with the additional property that each vertex in $\graph$ has exactly one outgoing edge to every other part. 
Also, we define $\rainbowperm(d)$ as the largest $k$ such that a $k$-partite graph exists in $\dpi{d}$, i.e.,
\[\rainbowperm(d) = \max_{\graph \in \dpi{d}} \#(\graph).\]
Our result in this section is an improved upper bound on $\rainbowperm(d)$ for every $d\geq 3$.
Our method slightly improves the method of Akrami \etal \shortcite{Akrami2022}, wherein the authors prove the upper bound of $2d-2$ on $\rainbowperm(d)$.
Throughout this section, we show that for $d \ge 4$ we have $\rainbowperm(d) \le 2d-4$.
In order to prove this bound, first in Theorem \ref{main2} we show that for $d \ge 3$, we have $\rainbowperm(d) \le 2d-3$. In the proof of Theorem \ref{main2}, we use the idea of constructing a sequence with certain properties. This idea has been previously used by Akrami \etal \shortcite{Akrami2022} to prove the upper bound of $2d-2$. Here, we strengthen the assumptions on the sequence. Lemma \ref{helping_node} plays a key role in route to proving our upper bound. 

We next show how we can incorporate $\ech(\ell)$ in the proof to improve the upper bound to $2d-4$. Later in Section \ref{section:exp}, we discuss the possibility of obtaining better upper bounds on $\rainbowperm(d)$ via a more effective incorporation of  $\ech(\ell)$ in the proof. While it might be possible to obtain $2d-c$ upper bound for $c>4$ with the same idea, we show that it is not possible to obtain an upper bound in the form of $d+c$ for a constant $c>0$ using the same method.

% similar to the one in Akrami et al. \cite{Akrami2022} with a slight difference. Moreover, we provide Lemma \ref{helping_node} and a series of claims in Lemma \ref{three_pairwise}, that make it possible to reduce the previous upper bound by one. Next, we show how we can use the definition of $\ech(\ell)$ to improve this bound to $2d-4$. Finally, in the Experiments section, we have a short discussion on why these ideas cannot be applied in order to get a constant factor better than two in an obvious way.

\begin{theorem}\label{main2}
	$\rainbowperm(d) \le 2d-3$ for $d \ge 3$.
\end{theorem}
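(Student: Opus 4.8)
The goal is to show $\rainbowperm(d) \le 2d-3$ for $d \ge 3$. I would argue by contradiction: suppose $\graph \in \bothpi{\ell}{d}$ with $\ell \ge 2d-2$ and $\graph$ admits no rainbow cycle. The central device, following the "sequence with certain properties" idea attributed to Akrami \etal, is to greedily build a long sequence of distinct vertices $v_1, v_2, \ldots$ lying in distinct parts, together with rainbow paths connecting consecutive terms, in such a way that adding one more vertex always stays rainbow. In the permutation setting every vertex has exactly one outgoing edge to each other part, so from a current "frontier" vertex $v_i$ sitting in part $\ppart_{j}$ one can follow its unique outgoing edge into any not-yet-used part; the absence of a rainbow cycle is what forces these extensions not to close up on earlier vertices of the sequence, and this is exactly what bounds the available parts. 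I expect the key technical ingredient to be a strengthened invariant on the sequence — where the earlier $2d-2$ bound only needed each new vertex to avoid repeating a part, here one wants each new vertex to be genuinely "fresh" with respect to two disjoint reservoirs of size roughly $d$ each (one reservoir of parts used "going forward," one of vertices that could be reached), so that running out of room yields $\ell \le 2d-3$ rather than $\ell \le 2d-2$.

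**Key steps, in order.** First I would set up the notion used in \textbf{Lemma \ref{helping_node}} (the "helping node" lemma referenced as playing the key role): presumably, given a rainbow path from some vertex $u$ to $\vertex$ and a part $\ppart_j$ not touched by that path, there is a vertex in $\ppart_j$ that either extends the path backward keeping it rainbow, or else a rainbow cycle is created. I would prove this by considering the unique outgoing edge (permutation property) from each vertex of $\ppart_j$ into the part of $u$, and doing a counting/pigeonhole argument over the $d$ vertices of $\ppart_j$: if none of them helps, too many of them must point "into" the path in a way that closes a rainbow cycle. Second, I would use Lemma \ref{helping_node} iteratively to grow the sequence $v_1, \ldots, v_t$ with $t$ as large as possible, maintaining the strengthened invariant; each successful step consumes at most a bounded number of new parts. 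Third, I would count: the invariant guarantees two essentially disjoint "pools" of parts each of size at least $d-1$ (or $d$), plus the part of the target vertex $\vertex$, so $\ell \le (d-1) + (d-1) + 1 = 2d-1$ is the naive bound and the sharper accounting of how the pools overlap the starting/ending structure shaves it to $2d-3$. Finally, I would check the small-$d$ boundary ($d=3$, so $2d-3 = 3$) directly to make sure the induction/greedy process has enough room to start.

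**Main obstacle.** The hard part will be pinning down the exact strengthened invariant on the constructed sequence so that it is simultaneously (a) maintainable — Lemma \ref{helping_node} must actually supply a vertex satisfying the stronger condition, not just the weaker "new part" condition, which likely requires the permutation hypothesis in an essential way and a careful choice of which part to extend into next; and (b) strong enough at termination — when no further extension is possible, the obstruction must carve out two nearly-disjoint size-$\approx d$ collections rather than one, which is precisely the improvement from $2d-2$ to $2d-3$. Getting both halves to meet in the middle is delicate: too weak an invariant and the count only gives $2d-2$, too strong and the greedy step fails. I would also need to handle carefully the edge cases where the growing path is short (length $1$ or $2$), since there the "two disjoint pools" picture degenerates and must be verified by hand.
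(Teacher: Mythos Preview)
Your plan has the right high-level shape (contradiction, then greedily build a sequence whose growth is obstructed by the absence of rainbow cycles), but two central pieces are misidentified, and as written the argument would not close.

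First, Lemma \ref{helping_node} is not a path-extension lemma. In the paper it says: for any vertex $\vertex$, among the vertices with an outgoing edge to $\vertex$ there is one, call it $\vertexx$, that receives no incoming edge from any of the others. The proof is one line (the induced subgraph on $\vertex$ and its in-neighbours must have a source, else there is a rainbow cycle). Its role is \emph{not} to feed the greedy growth; it is used once, before the sequence is built, to set aside a special part $\ppart_2$. Because $\bigvertex{2,1}$ is the helping node for $\bigvertex{1,1}$, no $\bigvertex{i,1}$ with $i>2$ points to $\bigvertex{2,1}$, so in every other part one can name a \emph{second} distinguished vertex $\bigvertex{i,2}$ (the one pointing to $\bigvertex{2,1}$). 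The payoff is Lemma \ref{label_one_and_two}: along the constructed sequence, neither $\bigvertex{\sigma_i,1}$ nor $\bigvertex{\sigma_i,2}$ can be reached from $\bigvertex{1,1}$, because reaching the first closes a cycle directly and reaching the second closes one via the reserved part $\ppart_2$.

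Second, the object being grown is not a single rainbow path of vertices, and the endgame is not a ``two disjoint pools'' count. The sequence is a sequence of \emph{parts} $\sigma_1,\ldots,\sigma_{2d-3}$ (avoiding $\ppart_2$) with the invariant that $\ppart_{\sigma_i}$ contains $\lceil i/2\rceil$ vertices that are $\sigma$-rightward-reachable from $\bigvertex{1,1}$ (Lemma \ref{list}); the growth step adds two parts at a time using a pigeonhole on marked versus unmarked vertices, with the inductive bound $\rainbowperm(d-i+1)\le 2(d-i)-1$ supplying the contradiction when no good pair exists. The improvement from $2d-2$ to $2d-3$ is then obtained not by sharper counting but by a structural case analysis (Lemma \ref{three_pairwise}) on the last five parts of the sequence, which produces three pairwise $\{1,2\}$-\dep\ parts; restricting to their label-$1$ and label-$2$ vertices gives a graph in $\dpi{2}$ with three parts, contradicting $\rainbowperm(2)=2$. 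Your ``two reservoirs of size $\approx d$'' picture does not appear, and I do not see how to make it yield $2d-3$ rather than $2d-2$ without something equivalent to the $\{1,2\}$-\dep\ argument.
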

%Throughout this section, we prove Theorem \ref{main2}. Next, we show how we can use the definition of $\ech(\ell)$ to improve this bound to $2d-4$. 

As a contradiction, suppose there is a graph $\graph$ in $\dpi{d}$ consisting of at least $2d-2$ parts, i.e., $\#(\graph) \ge 2d-2$. We denote by $v_{i,j}$ the $j$'th vertex in the $i$'th part of $\graph$.

The first important step in order to improve the previous result is stated in Lemma \ref{helping_node}.
In this lemma, we show that for every vertex $v$ there is a vertex $u$ with an outgoing edge to $v$, such that no other vertex has outgoing edge to both $v$ and $u$.

\begin{lemma} \label{helping_node}
	For each vertex $\vertex$, there exists some vertex $\vertexx$ with an outgoing edge to $\vertex$ such that
	for any vertex $w$ with an outgoing edge to $\vertex$, $w$ does not have an outgoing edge to $\vertexx$.
\end{lemma}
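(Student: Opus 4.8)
The setting is the Permutation Rainbow Cycle problem: $\graph \in \dpi{d}$ has $\#(\graph) \ge 2d-2$ parts, each vertex has exactly one outgoing edge to every other part (hence exactly one incoming edge from every other part), and $\graph$ admits no rainbow cycle. Fix a vertex $\vertex$, say in part $\ppart_{\ell+1}$ where $\ell+1 = \#(\graph)$. For each other part $\ppart_i$ there is exactly one vertex $\vertex_i \in \ppart_i$ with an outgoing edge to $\vertex$; call these the \emph{in-neighbors} of $\vertex$, one per part. I want to find an index $i$ such that no vertex $w$ (in any part) has outgoing edges to both $\vertex$ and $\vertex_i$. Equivalently, I want to show that the in-neighbors of $\vertex$ cannot \emph{all} be ``blocked'' in the sense that each $\vertex_i$ shares a common predecessor with $\vertex$.

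**Key steps.**
First I would set up the bipartite-style incidence structure: for a candidate in-neighbor $\vertexx = \vertex_i$, say $\vertexx$ is \emph{bad} if some vertex $w$ has an outgoing edge to $\vertex$ and also an outgoing edge to $\vertexx$. Note $w$ must lie in a part different from both $\ppart_i$ and $\ppart_{\ell+1}$ (a vertex has no edge to another vertex in its own part), and since $w$ has an edge to $\vertex$, $w$ is itself one of the in-neighbors $\vertex_j$. So ``$\vertex_i$ is bad'' means: there exists $j \ne i$ with $\vertex_j$ having an outgoing edge to $\vertex_i$. Now suppose for contradiction that \emph{every} in-neighbor $\vertex_i$ is bad. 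Then on the set $T = \{\vertex_1, \dots, \vertex_\ell\}$ of in-neighbors, each vertex has in-degree at least $1$ within the induced subgraph on $T$ (counting only the edges coming from other in-neighbors). A digraph in which every vertex has in-degree $\ge 1$ contains a directed cycle. Take such a cycle $\vertex_{i_1} \to \vertex_{i_2} \to \cdots \to \vertex_{i_r} \to \vertex_{i_1}$; since each $\vertex_{i_j}$ lives in a distinct part, this is a \emph{rainbow} cycle, contradicting $\graph \in \dpi{d}$.

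**Wait — this only uses the no-rainbow-cycle property and permutation structure, not the bound $\#(\graph) \ge 2d-2$; that is fine, Lemma~\ref{helping_node} is a structural fact that should hold in any permutation instance.** The one subtlety to check carefully: an in-neighbor $\vertex_i$ is bad because some $w$ has edges to both $\vertex$ and $\vertex_i$; I argued $w = \vertex_j$ for some $j$, but I should double-check that $w \ne \vertex_i$ itself (it is, since $w$ would need an edge to $\vertex_i$, and no self-loops / no same-part edges) and $w \ne \vertex$ (it is, since $\vertex$ has no outgoing edge to itself). So indeed $w \in T \setminus \{\vertex_i\}$, and the directed graph on $T$ induced by edges ``$\vertex_j \to \vertex_i$'' has minimum in-degree $\ge 1$ under the contradiction hypothesis.

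**Main obstacle.** The argument is short; the only thing requiring care is making sure the cycle extracted from the min-in-degree-$\ge 1$ digraph on $T$ is genuinely rainbow — this is immediate because the $\vertex_i$ are in pairwise distinct parts — and that the ``predecessor'' $w$ of $\vertex$ and $\vertex_i$ is forced to itself be an in-neighbor of $\vertex$, which uses that each part has a \emph{unique} vertex pointing to $\vertex$ (indeed a unique vertex pointing to \emph{any} fixed vertex), a consequence of the permutation property. I would state these two observations as the crux and let the cycle-extraction finish the proof by contradiction.
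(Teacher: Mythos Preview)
Your proof is correct and follows essentially the same approach as the paper: the paper considers the induced subgraph on $\vertex$ together with all its in-neighbors (which lie in pairwise distinct parts), observes that some vertex in this subgraph must have in-degree zero lest a rainbow cycle appear, and takes that vertex as $\vertexx$. Your version restricts attention to the in-neighbor set $T$ and extracts the same contradiction via the min-in-degree-$\ge 1$ cycle argument; the only cosmetic difference is that the paper includes $\vertex$ in the subgraph, which is immaterial since $\vertex$ itself has positive in-degree there.
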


\begin{proof}
	Consider the induced subgraph of $\graph$, consisting $v$ and all the vertices that have an outgoing edge to $v$. Recall that by definition of $\rainbowperm(d)$, none of these vertices belong to the same part of $\graph$. In this subgraph, the in-degree of at least one vertex is zero; otherwise we have a rainbow cycle in $\graph$. The vertex with in-degree zero in the induced subgraph satisfies the condition of Lemma \ref{helping_node}.
\end{proof}

Consider vertex $\bigvertex{1,1}$.  We know that in every other part, there exists a vertex with an outgoing edge to $\bigvertex{1,1}$. Without loss of generality, we assume that for every $j$, vertex $\bigvertex{j,1}$ is the vertex with an outgoing edge to $\bigvertex{1,1}$.

Also, by Lemma \ref{helping_node}, we know that there exists an index $k$ such that $\bigvertex{k,1}$ has no incoming edge from any $\bigvertex{k',1}$ for $k' \notin \{1,k\}$. Again, without loss of generality, we suppose that $k=2$. Therefore, we have that for every $i>1$, vertex $\bigvertex{i,1}$ has an outgoing edge to  $\bigvertex{1,1}$ and for every $i>2$, $\bigvertex{i,1}$ does not have an outgoing edge to $\bigvertex{2,1}$. By definition, we know that for every $i \neq 2$, there exists a vertex in part $\ppart_{i}$ with an outgoing edge to   $\bigvertex{2,1}$. Without loss of generality, we suppose that for every $i>2$, this vertex in part $\ppart_i$ is $\bigvertex{i,2}$.
                                                                                                                                       
\begin{definition}
	Consider a sequence of indices $\sigma = \sigma_1, \sigma_2, ..., \sigma_k$, such that $\sigma_1 = 1$. Given $\sigma$, we say a vertex $\bigvertex{\sigma_i, j}$ is $\sigma$-\reachable~if there exists a rainbow path from $\bigvertex{1,1}$ to $\bigvertex{\sigma_i,j}$  in $\graph[\{\ppart_{\sigma_1},\ppart_{\sigma_2},\ldots,\ppart_{\sigma_k}\}]$.
	Moreover, we say an edge in $\graph[\{\ppart_{\sigma_1},\ppart_{\sigma_2},\ldots,\ppart_{\sigma_k}\}]$ is $\sigma$-rightward if it is of the form $(\vertex_{\sigma_{j,k}},\vertex_{\sigma_{{j'},k'}})$ where $j<j'$. A vertex $\bigvertex{\sigma_i, j}$ is $\sigma$-\rightreachable~if there exists a rainbow path from $\bigvertex{1,1}$ to $\bigvertex{\sigma_i,j}$ via $\sigma$-rightward edges.
\end{definition}

As we mentioned before, it is sufficient to show if $\graph$ contains at least $2d-2$ parts (and $d \ge 3$), then we have a rainbow cycle in $\graph$. We use induction to prove this claim.  For the base case $d=3$, it has already shown in \cite{chaudhury2021improving} that  $\rainbowperm(3) = 3$, which means $\rainbowperm(3) \le 2 \times 3 - 3$. Now, suppose that the claim holds for every $d'<d$ and our goal is to prove the claim for $d$.  As a contradiction, we suppose that $\graph$ does not admit any rainbow cycle.
We start by proving Lemma \ref{list}. 

\begin{lemma}
	\label{list}
	There exists a sequence of form  $\sigma = \sigma_1,\sigma_2,\ldots,\sigma_{2d-3}$ such that for every $1 \leq i \leq 2d-3$, we have $\sigma_i \in [1,2d-2]$ and the following properties hold:  
	\begin{itemize}
		\item $\sigma_1 = 1$.
		\item For every $1 \leq i \leq 2d-3$, we have $\sigma_i \neq 2$.
		\item For every $2 \leq i \leq 2d-4$, there are $\lceil \frac{i}{2}\rceil$ $\sigma$-\rightreachable~vertices in $\ppart_{\sigma_i}$.  
		\item There are $d-2$ $\sigma$-\rightreachable~vertices in $\ppart_{\sigma_{2d-3}}$. 
	\end{itemize}
\end{lemma}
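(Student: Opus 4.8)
The plan is to build the sequence greedily, one index at a time, by induction on the prefix length $i$, restricting attention to $2d-2$ of the parts and maintaining the invariant that for every $2\le j\le i$ the part $\ppart_{\sigma_j}$ contains at least $\lceil j/2\rceil$ $\sigma$-rightreachable vertices (and at least $d-2$ once the final index $i=2d-3$ has been placed). This is well posed incrementally, because a $\sigma$-rightward path ending at a vertex of $\ppart_{\sigma_j}$ uses only the parts $\ppart_{\sigma_1},\dots,\ppart_{\sigma_j}$; hence whether a vertex of $\ppart_{\sigma_j}$ is $\sigma$-rightreachable is already fixed by the prefix $\sigma_1,\dots,\sigma_j$ and is unaffected by later extensions. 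The base case is $\sigma_1=1$, for which $\bigvertex{1,1}$ is trivially $\sigma$-rightreachable. Part $2$ is simply never placed in the sequence: at each step the candidate indices are drawn from $[1,2d-2]\setminus(\{2\}\cup\{\sigma_1,\dots,\sigma_{i-1}\})$, so $\sigma_i\neq 2$ holds by construction, while $\ppart_2$ and the vertices $\bigvertex{j,2}$ are kept in reserve for the contradiction to be derived from this lemma afterwards.

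Before the inductive step I would record the structural consequences of the standing assumption that $\graph$ admits no rainbow cycle. The central one is that no vertex $\bigvertex{q,1}$ can ever be $\sigma$-rightreachable: a $\sigma$-rightward path from $\bigvertex{1,1}$ to $\bigvertex{q,1}$ followed by the edge $\bigvertex{q,1}\to\bigvertex{1,1}$ would be a rainbow cycle. More generally, a $\sigma$-rightreachable vertex cannot have an edge back to a vertex already lying on one of the $\sigma$-rightward paths witnessing its reachability, for exactly the same reason. This observation — a generalization of Lemma \ref{helping_node}, which I also expect to reapply at each stage to the part currently being processed — is what will forbid the degenerate configurations in the counting step.

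For the inductive step, assume $\sigma_1,\dots,\sigma_{i-1}$ has been constructed with the invariant, where $i\le 2d-3$, and let $R$ be the set of all $\sigma$-rightreachable vertices in $\ppart_{\sigma_1}\cup\dots\cup\ppart_{\sigma_{i-1}}$; the invariant gives $|R|\ge 1+\sum_{j=2}^{i-1}\lceil j/2\rceil$, which is quadratic in $i$. Appending an index $q$ as $\sigma_i$ makes exactly those vertices of $\ppart_q$ that receive an edge from some vertex of $R$ newly $\sigma$-rightreachable, since each such edge is $\sigma$-rightward and prolongs a rightward path. There are $2d-2-i\ge 1$ candidate indices left (at least two when $i\le 2d-4$), so I would double-count the edges from $R$ into the candidate parts and argue that if every candidate part had fewer than $\lceil i/2\rceil$ distinct vertices hit by $R$, then inside some candidate part a single vertex would receive the edges of too many vertices of $R$, and combining their $\sigma$-rightward paths through that vertex would close a rainbow cycle — a contradiction. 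This produces a valid choice of $\sigma_i$. When $i=2d-3$ exactly one part remains and is forced; the same counting, now with $R$ already large, shows the forced part $\ppart_{\sigma_{2d-3}}$ receives at least $d-2$ $\sigma$-rightreachable vertices, which closes the induction.

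The hard part is precisely this counting-and-structure step: ruling out the situation in which the out-edges of the many $\sigma$-rightreachable vertices concentrate on only a few targets inside each remaining part, so that obtaining the quantitatively correct threshold $\lceil i/2\rceil$ forces the no-rainbow-cycle hypothesis to be used at essentially full strength, in combination with the blocked vertices $\bigvertex{q,1}$, the reserved part $\ppart_2$, and a per-stage appeal to Lemma \ref{helping_node}; it is also where the ``$+1$ every two indices'' shape of the bound should fall out of the estimate. Everything else — bookkeeping the available parts, the well-posedness of incremental rightreachability, and the forced final choice — is routine, and if needed the outer induction hypothesis $\rainbowperm(d')\le 2d'-3$ for $d'<d$ is available to be invoked on sub-collections of parts.
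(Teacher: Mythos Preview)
Your proposal has a genuine gap at exactly the place you flag as ``the hard part'': the sentence ``if every candidate part had fewer than $\lceil i/2\rceil$ distinct vertices hit by $R$, then inside some candidate part a single vertex would receive the edges of too many vertices of $R$, and combining their $\sigma$-rightward paths through that vertex would close a rainbow cycle'' is not justified, and in fact is not correct as stated. A vertex $v$ in a candidate part receiving incoming edges from many $r_1,\dots,r_k\in R$ does not by itself close a rainbow cycle: you would need an outgoing edge from $v$ landing on one of the rightward paths $P_j$ from $\bigvertex{1,1}$ to $r_j$, and nothing in your counting forces that. The per-stage invocation of Lemma \ref{helping_node} does not help here either, since it concerns the in-neighbourhood of a single fixed vertex, not the out-edges of $v$. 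So the one-index-at-a-time double-counting scheme, as written, does not go through.

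The paper's argument is structured rather differently and avoids this obstacle. It extends the sequence \emph{two} indices at a time (which is also why the bound is $\lceil i/2\rceil$), and at each stage it only uses the $i-1$ rightreachable vertices of the single last part $\ppart_{\sigma_{2i-3}}$, not all of $R$. In every one of the $2(d-i)$ remaining parts it marks the $i-1$ vertices hit by those (exactly $i-1$ of them, because the edges form a permutation). If some remaining part $\ppart_\alpha$ has a marked vertex with an edge to an \emph{unmarked} vertex of another remaining part $\ppart_\beta$, then appending $(\alpha,\beta)$ gives $i$ rightreachable vertices in $\ppart_\beta$. Otherwise there is no marked-to-unmarked edge among the remaining parts, and by the permutation property there is no unmarked-to-marked edge either; hence the unmarked vertices form a graph in $\dpi{d-i+1}$ with $2(d-i)$ parts, contradicting the outer induction hypothesis $\rainbowperm(d-i+1)\le 2(d-i)-1$. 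You do mention the outer hypothesis as ``available if needed'', but in the paper it is not a backup --- it is the entire content of the step, applied to this explicitly carved-out subgraph. The final two indices are then appended for free. The permutation hypothesis (bijective inter-part maps) is used essentially twice here --- to get exactly $i-1$ marked vertices per part, and to upgrade one-sided separation to two-sided --- and neither use appears in your sketch.
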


\begin{figure}
	\centering
	\includegraphics[scale=0.7]{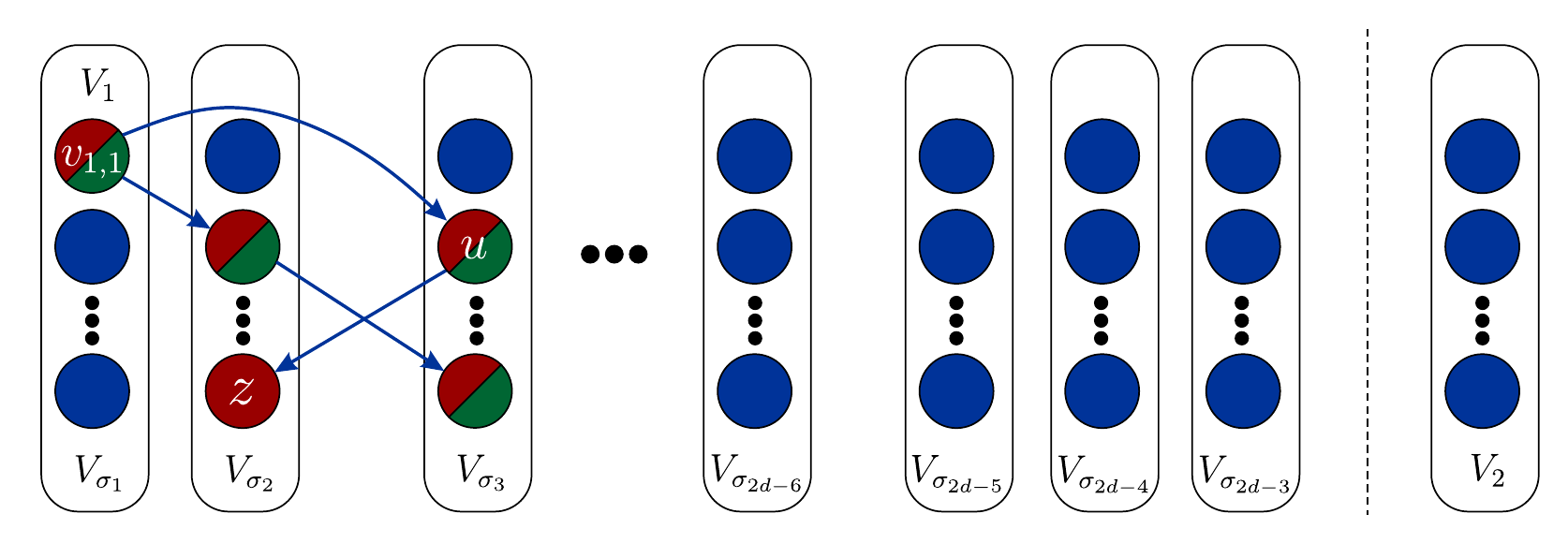}
	\caption{An illustration of the final setting of Lemma \ref{list}. Red vertices are \reachable~and green vertices are \rightreachable. Vertex $z$ is \reachable~but not \rightreachable~since the path from $\bigvertex{1,1}$ uses the edge from $u$ to $z$ which is not rightward.} \label{fig:specialcase1}
\end{figure}

\begin{proof}
		In order to prove that such a sequence exists, first we prove that for each $1 \le i \le d-2$, we can find a sequence $\sigma_1, \sigma_2, ..., \sigma_{2i-1}$, such that $\sigma_1=1$ and for each $1 \leq j \leq 2i-1$, $\lceil \frac{j}{2}\rceil$ different vertices of part $\ppart_{\sigma_j}$ are $\sigma$-\rightreachable. To prove this, we use induction on $i$. 
The base case $i = 1$ is trivial: just consider the sequence which consists of $\sigma_1=1$. Note that $\bigvertex{1,1} \in \ppart_{1}$ and this vertex is $\sigma$-\rightreachable~with a path of length 0 (one vertex). Now, suppose that the claim holds for every $i'<i$ and we want to prove it for $i$. Let 
$\sigma_1, \sigma_2, ..., \sigma_{2i-3}$ be the desired sequence for $i-1$. If we put parts $\ppart_{\sigma_j}$ for $1 \leq j \leq 2d-3$ and $\ppart_{2}$ aside, we have $2d-2-(2i-3+1)=2(d-i)$ remaining parts. For each of the remaining parts such as $\pparttt$, we already know that $i-1$ vertices of $\pparttt$ can be $\sigma$-\rightreachable, since $i-1$ vertices of $\ppart_{\sigma_{2i-3}}$ are $\sigma$-\rightreachable, and from those $i-1$ vertices, we can go to $i-1$ vertices of $\pparttt$ via a direct edge. We mark all those $i-1$ vertices in the remaining parts. Each of these $2(d-i)$ remaining parts has $d-i+1$ unmarked vertices. 

\begin{figure}
	\centering
	\includegraphics[scale=0.65]{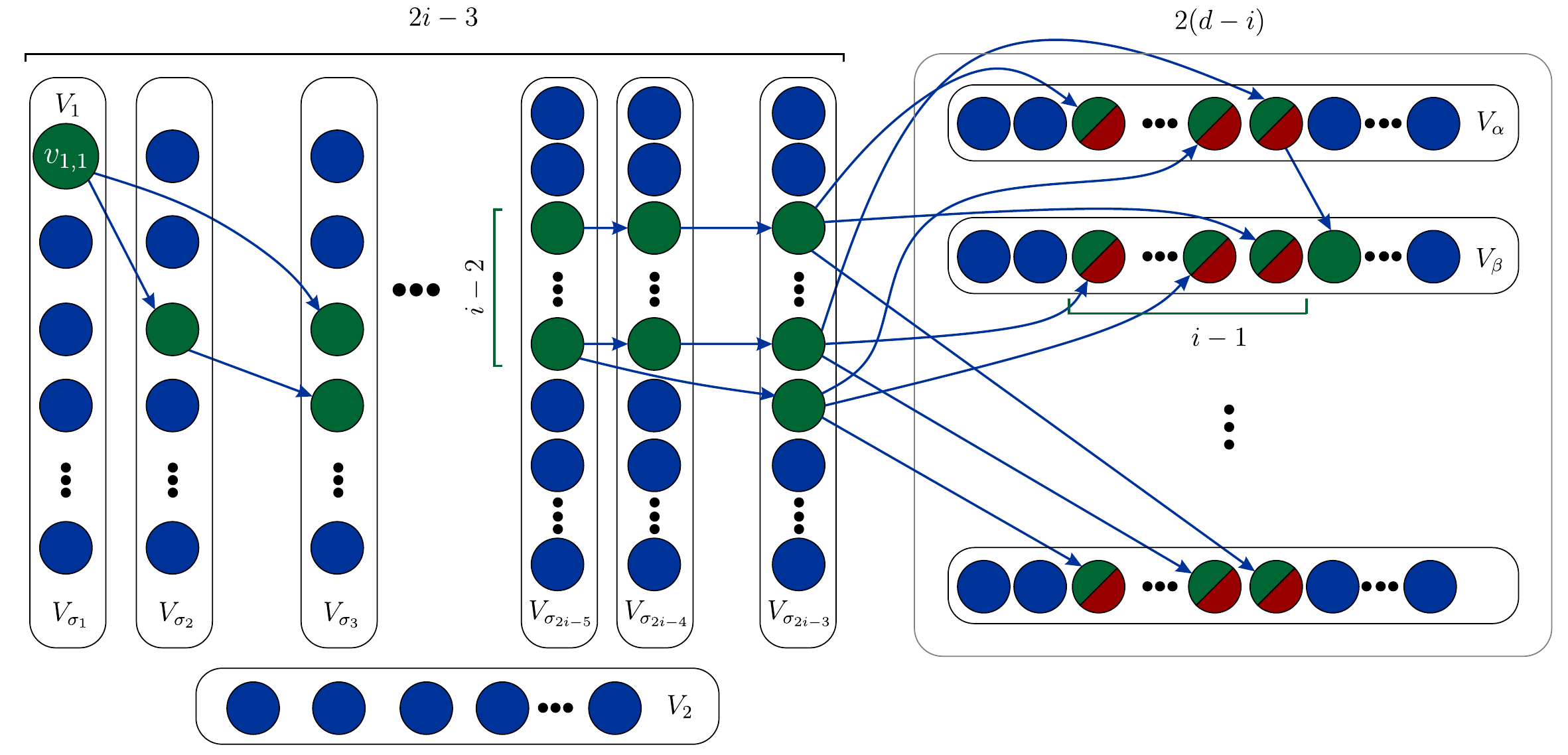}
	\caption{This figure is an illustration of adding two new indices at the end of the sequence. The green vertices in $2i-3$ parts in the left are $\sigma$-\rightreachable. In the $2(d-i)$ remaining parts, vertices with red color are marked vertices and vertices with green color are potentially $\sigma$-\rightreachable, i.e., if we add their index at the end of the sequence, then these vertices are $\sigma$-\rightreachable. Moreover, parts $\ppart_\alpha$ and $\ppart_\beta$ are illustrated in this figure. There is an outgoing edge from a marked vertex in $\ppart_\alpha$ to an unmarked vertex in $\ppart_\beta$, which makes an additional vertex in $\ppart_\beta$ potentially $\sigma$-\rightreachable. Thus, we add $\alpha$ and $\beta$ to the end of the sequence as $\sigma_{2i-2}$ and $\sigma_{2i-1}$.} \label{fig:prmutationsequencelemma1}
\end{figure}

Now, if for two of these parts, namely $\ppart_\alpha$ and $\ppart_\beta$, there exists a directed edge from a marked vertex of $\ppart_\alpha$ to an unmarked vertex of $\ppart_\beta$,  then we can add $\alpha$ and $\beta$ to the end of the sequence, i.e., $\sigma_{2i-2} = \alpha$ and $\sigma_{2i-1} = \beta$. Note that at least $i$ vertices of $\ppart_\beta$ are $\sigma$-\rightreachable: $i-1$ marked vertices and one unmarked vertex which has an incoming edge from a marked vertex of $\ppart_\alpha$. You can see an illustration of this part of the proof in Figure \ref{fig:prmutationsequencelemma1}. 		

Therefore, we can assume that there is no outgoing edge from a marked vertex to an unmarked vertex among these $2(d-i)$ parts. Note that, since the edges of each part form a permutation, the reverse is also true: there is no outgoing edge from an unmarked vertex to a marked vertex.  Now, let $\graphh$ be the induced subgraph of $\graph$ consisting the unmarked vertices of these $2(d-i)$ parts. Since each of these parts has $i-1$ marked vertices in $\graph$, each part in $\graphh$ consists of $d-i+1$ vertices. If $\graphh$ admits no rainbow cycle, we must have $\graphh \in \dpi{d-i+1}$. However, since $i\leq d-2$, we have $d-i+1\geq 3$. Based on induction hypothesis, we know $\rainbowperm(d-i+1)\leq 2(d-i)-1$, which contradicts that $\graphh$ contains $2(d-i)$ parts.

Therefore, we can form a sequence $\sigma_1, \sigma_2, ..., \sigma_{2d-5}$. Since $\graph$ has $2d-2$ parts, other than the parts pointed to in the sequence and part $\ppart_2$ we have two remaining parts, namely $\ppart_\alpha$ and $\ppart_\beta$.
We can add $\alpha$ and $\beta$ to the end of the sequence as $\sigma_{2d-4}$ and $\sigma_{2d-3}$ because $d-2$ vertices of $\ppart_{\sigma_{2d-5}}$ are $\sigma$-\rightreachable~and each of them has an outgoing edge to a unique vertex in $\ppart_{\sigma_{2d-4}}$ and $\ppart_{\sigma_{2d-3}}$. Therefore, there are $d-2$ $\sigma$-\rightreachable~vertices in $\ppart_{\sigma_{2d-4}}$ and $\ppart_{\sigma_{2d-3}}$. As a result, we can form a sequence of form $\sigma_1, \sigma_2, ..., \sigma_{2d-3}$ consisting of $2d-3$ parts, which is the desired sequence.

\end{proof}

Let $\sigma$ be the sequence that satisfies the properties of Lemma \ref{list}. In Lemma \ref{label_one_and_two}, we prove another property for such a sequence.

\begin{lemma} \label{label_one_and_two}
	For every sequence $\sigma$ with properties mentioned in Lemma \ref{list} and every $2 \leq i \leq 2d-3$, vertices $\bigvertex{\sigma_i,1}$ and $\bigvertex{\sigma_i,2}$ are not $\sigma$-\reachable.
\end{lemma}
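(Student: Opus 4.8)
Recall the setup from Lemma \ref{list}: the vertices $\bigvertex{i,1}$ are those with an outgoing edge to $\bigvertex{1,1}$, and $\bigvertex{i,2}$ (for $i>2$) are those with an outgoing edge to $\bigvertex{2,1}$; moreover $\sigma_1=1$ and $2$ is excluded from the sequence. I would argue by contradiction: suppose $\bigvertex{\sigma_i,1}$ or $\bigvertex{\sigma_i,2}$ is $\sigma$-\reachable{} for some $2\le i\le 2d-3$, i.e. there is a rainbow path from $\bigvertex{1,1}$ to it inside $\graph[\{\ppart_{\sigma_1},\ldots,\ppart_{\sigma_{2d-3}}\}]$. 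The key point is that this path can be closed into a rainbow cycle through part $\ppart_2$, which is not used by any $\ppart_{\sigma_j}$.

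\textbf{Case 1: $\bigvertex{\sigma_i,1}$ is $\sigma$-\reachable.} Let $p$ be a rainbow path from $\bigvertex{1,1}$ to $\bigvertex{\sigma_i,1}$; since $\sigma_i\neq 2$, the path $p$ does not touch $\ppart_2$. By construction $\bigvertex{\sigma_i,1}$ has an outgoing edge to $\bigvertex{1,1}$, so $p$ followed by that edge is a closed walk, and it is a rainbow cycle because $p$ never repeats a part and returns to $\ppart_1=\ppart_{\sigma_1}$ only at the very end. This contradicts the assumption that $\graph$ has no rainbow cycle. (One must note that $p$ has length at least $1$ since $\sigma_i\neq 1$, so this is a genuine cycle, not a single vertex.)

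\textbf{Case 2: $\bigvertex{\sigma_i,2}$ is $\sigma$-\reachable.} Let $p$ be a rainbow path from $\bigvertex{1,1}$ to $\bigvertex{\sigma_i,2}$, again avoiding $\ppart_2$. Now $\bigvertex{\sigma_i,2}$ has an outgoing edge to $\bigvertex{2,1}\in\ppart_2$, and $\bigvertex{2,1}$ has an outgoing edge to $\bigvertex{1,1}$. Concatenating $p$, the edge to $\bigvertex{2,1}$, and the edge to $\bigvertex{1,1}$ gives a closed walk that passes through each part at most once: $p$ uses only parts among $\{\ppart_{\sigma_1},\ldots,\ppart_{\sigma_{2d-3}}\}$ (none equal to $\ppart_2$), the step into $\ppart_2$ is the unique visit to $\ppart_2$, and the final step returns to $\ppart_1$. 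Hence this is a rainbow cycle, again a contradiction.

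\textbf{Main obstacle.} The only delicate points are bookkeeping ones: verifying that the appended one or two edges do not revisit a part already used by $p$ (which is exactly where the exclusion $\sigma_i\neq 2$ and the fact that $p\subseteq \graph[\{\ppart_{\sigma_j}\}]$ are used), and confirming the resulting walk has positive length so it is truly a cycle. Both are immediate from the structure of $\sigma$ set up before Lemma \ref{list} and in its statement, so I expect the proof to be short; the conceptual content is simply that parts $\bigvertex{\cdot,1}$ and $\bigvertex{\cdot,2}$ are "one step away from closing a cycle back to $\bigvertex{1,1}$ through $\ppart_2$," a part deliberately kept out of the sequence.
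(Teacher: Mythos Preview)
Your proposal is correct and follows essentially the same argument as the paper's proof: close the rainbow path from $\bigvertex{1,1}$ back into a cycle using the edge $\bigvertex{\sigma_i,1}\to\bigvertex{1,1}$ in the first case, and via the detour $\bigvertex{\sigma_i,2}\to\bigvertex{2,1}\to\bigvertex{1,1}$ through the excluded part $\ppart_2$ in the second. Your write-up is more explicit about the bookkeeping (non-revisiting of parts, positive length), but the idea is identical.
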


\begin{proof}
	As a contradiction, if $\bigvertex{\sigma_i,1}$ is $\sigma$-\reachable, since $\bigvertex{\sigma_i,1}$ has an outgoing edge to $\bigvertex{1,1}$, then we have a rainbow cycle in $\graph$.
	On the other hand, if $\bigvertex{\sigma_i,2}$ is $\sigma$-\reachable, since there is an outgoing edge from $\bigvertex{\sigma_i,2}$ to $\bigvertex{2,1}$, and there is an outgoing edge from $\bigvertex{2,1}$ to $\bigvertex{1,1}$, and part $\ppart_2$ is outside of the sequence, then we have a rainbow cycle in $\graph$.
\end{proof}

\begin{definition} \label{onecorr}
	If we consider $S$ as a subset of $\{1, 2, ..., d\}$, part $\ppart_i$ is \oneway~$S$-\dep~to part $\ppart_j$, if and only if for each vertex $\bigvertex{i,k}$ such that $k \in S$, it has an outgoing edge to $\bigvertex{j,l}\in \ppart_j$, such that $l \in S$.
\end{definition}

\begin{definition} \label{corr}
	Parts $\ppart_i$ and $\ppart_j$ are $S$-\dep, if and only if:
	\begin{itemize}
		\item $\ppart_i$ is \oneway~$S$-\dep~to $\ppart_j$
		\item $\ppart_j$ is \oneway~$S$-\dep~to $\ppart_i$
	\end{itemize}
\end{definition}
\begin{figure}
	\centering
	\includegraphics[scale=0.7]{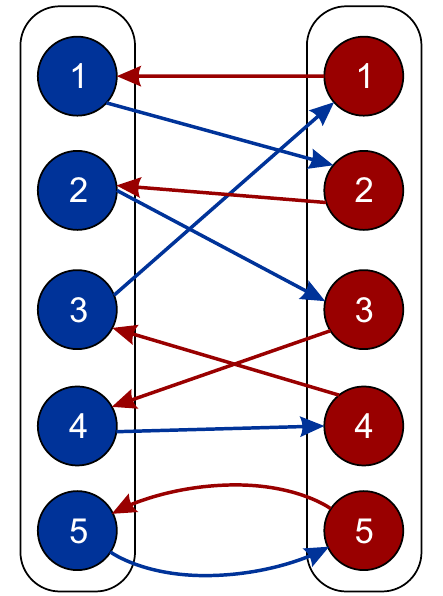}
	\caption{In this example, the blue part is \oneway~$\{1,2,3\}$-\dep~to the red part. Moreover, the blue part and the red part are $\{1,2,3,4\}$-\dep, since the blue part is \oneway~$\{1,2,3,4\}$-\dep~to the red part and vice versa.} \label{fig:permutation:scorresponding}
\end{figure}
See Figure \ref{fig:permutation:scorresponding} for an illustrative example.

\begin{lemma} \label{three_pairwise}
	There are three pairwise $\{1,2\}$-\dep~parts.
\end{lemma}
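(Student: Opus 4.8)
The plan is to produce the three parts as the distinguished part $\ppart_2$ together with the last two entries of the sequence $\sigma$ furnished by Lemma~\ref{list}; we are in the inductive step, so $d\ge 4$. The first step is purely book-keeping: by items three and four of Lemma~\ref{list}, $\ppart_{\sigma_{2d-4}}$ and $\ppart_{\sigma_{2d-3}}$ each have at least $d-2$ $\sigma$-\rightreachable~vertices; by Lemma~\ref{label_one_and_two} neither $\bigvertex{\sigma_i,1}$ nor $\bigvertex{\sigma_i,2}$ is $\sigma$-\reachable; and a part has at most $d$ vertices. Together these force each of these two parts to have exactly $d$ vertices, with every vertex other than its first two being $\sigma$-\rightreachable.

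The engine of the argument is a transfer principle: if $\ppart_a,\ppart_b$ are parts such that every vertex of $\ppart_a$ other than $\bigvertex{a,1},\bigvertex{a,2}$ is reachable from $\bigvertex{1,1}$ by a rainbow path that misses $\ppart_b$, then $\ppart_a$ is \oneway~$\{1,2\}$-\dep~to $\ppart_b$. To see this, for each $k\ge 3$ take such a path to $\bigvertex{a,k}$ and append the edge of $\graph$ from $\bigvertex{a,k}$ to its unique out-neighbour in $\ppart_b$; the extended path is rainbow and ends at that out-neighbour, so by Lemma~\ref{label_one_and_two} the out-neighbour is neither $\bigvertex{b,1}$ nor $\bigvertex{b,2}$. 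Since $\graph\in\dpi{d}$, the edges from $\ppart_a$ to $\ppart_b$ form a bijection, hence $\bigvertex{a,3},\ldots,\bigvertex{a,d}$ are matched exactly to $\bigvertex{b,3},\ldots,\bigvertex{b,d}$, which forces $\bigvertex{a,1}$ and $\bigvertex{a,2}$ to point into $\{\bigvertex{b,1},\bigvertex{b,2}\}$.

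Four of the six required one-way relations now follow. A $\sigma$-rightward path ending in $\ppart_{\sigma_i}$ uses only parts of $\sigma$-index at most $i$; in particular it never visits $\ppart_2$ (which lies outside $\sigma$), so the transfer principle gives that $\ppart_{\sigma_{2d-4}}$ and $\ppart_{\sigma_{2d-3}}$ are each \oneway~$\{1,2\}$-\dep~to $\ppart_2$. Likewise, a rightward path to a vertex of $\ppart_{\sigma_{2d-4}}$ never reaches $\ppart_{\sigma_{2d-3}}$, while (unwinding the construction in the proof of Lemma~\ref{list}) the $\sigma$-\rightreachable~vertices of $\ppart_{\sigma_{2d-3}}$ are reached by prolonging rightward paths into $\ppart_{\sigma_{2d-5}}$ by a single edge and hence by paths missing $\ppart_{\sigma_{2d-4}}$; so $\ppart_{\sigma_{2d-4}}$ and $\ppart_{\sigma_{2d-3}}$ are $\{1,2\}$-\dep.

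The remaining and genuinely hard part is the two relations ``$\ppart_2$ is \oneway~$\{1,2\}$-\dep~to $\ppart_{\sigma_{2d-4}}$'' and ``$\ppart_2$ is \oneway~$\{1,2\}$-\dep~to $\ppart_{\sigma_{2d-3}}$'', i.e.\ that $\bigvertex{2,1}$ and $\bigvertex{2,2}$ point into the first two vertices of each of those parts. Here I would argue by contradiction: if, say, $\bigvertex{2,1}$ or $\bigvertex{2,2}$ had an edge into some $\bigvertex{\sigma_{2d-4},k}$ with $k\ge 3$, I would splice a $\sigma$-rightward reachability certificate of $\bigvertex{\sigma_{2d-4},k}$ (which misses $\ppart_2$) together with the distinguished edges $\bigvertex{2,1}\to\bigvertex{1,1}$ and $\bigvertex{\sigma_{2d-4},2}\to\bigvertex{2,1}$ and the property, coming from Lemma~\ref{helping_node}, that no vertex $\bigvertex{j,1}$ with $j\notin\{1,2\}$ points to $\bigvertex{2,1}$, so as to build a rainbow cycle in $\graph$. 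Arranging this cycle so that no part — in particular $\ppart_1$ — is entered twice is where the care is needed; once it is done, the triple $\{\ppart_2,\ppart_{\sigma_{2d-4}},\ppart_{\sigma_{2d-3}}\}$ is pairwise $\{1,2\}$-\dep. (If this closing step proves awkward, the fallback is to take the last three parts $\ppart_{\sigma_{2d-5}},\ppart_{\sigma_{2d-4}},\ppart_{\sigma_{2d-3}}$ of $\sigma$ and supply the two missing backward relations by a similar rainbow-cycle argument.)
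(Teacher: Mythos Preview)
Your transfer principle is exactly right and captures what the paper calls Claims~(i) and~(ii); likewise the book-keeping and the four ``forward'' one-way relations you derive are fine. The gap is in the remaining two relations, and it is genuine rather than a matter of care.

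You want $\ppart_2$ to be \oneway~$\{1,2\}$-\dep~to $\ppart_{\sigma_{2d-4}}$, i.e.\ that $\bigvertex{2,1}$ and $\bigvertex{2,2}$ both point into $\{\bigvertex{\sigma_{2d-4},1},\bigvertex{\sigma_{2d-4},2}\}$. But the setup gives you no control whatsoever over where $\bigvertex{2,1}$ (let alone the arbitrarily labelled $\bigvertex{2,2}$) points inside $\ppart_{\sigma_{2d-4}}$. Your proposed contradiction assembles the pieces $\bigvertex{\sigma_{2d-4},2}\to\bigvertex{2,1}\to\bigvertex{1,1}\leadsto\bigvertex{\sigma_{2d-4},k}$; this is a walk that begins and ends in $\ppart_{\sigma_{2d-4}}$ at two \emph{different} vertices, and the hypothetical bad edge $\bigvertex{2,*}\to\bigvertex{\sigma_{2d-4},k}$ also only \emph{enters} $\bigvertex{\sigma_{2d-4},k}$. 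Nothing in your toolkit lets you \emph{leave} $\bigvertex{\sigma_{2d-4},k}$ toward a label-$1$ or label-$2$ vertex: all reachability certificates in this proof go \emph{out of} $\bigvertex{1,1}$, and the only edges back into $\bigvertex{1,1}$ or $\bigvertex{2,1}$ come from label-$1$ and label-$2$ vertices, which are precisely the non-reachable ones. So the cycle simply does not close.

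Your fallback has the same obstruction. With the triple $\ppart_{\sigma_{2d-5}},\ppart_{\sigma_{2d-4}},\ppart_{\sigma_{2d-3}}$, the two missing backward relations are ``$\ppart_{\sigma_{2d-4}}$ (resp.\ $\ppart_{\sigma_{2d-3}}$) is \oneway~$\{1,2\}$-\dep~to $\ppart_{\sigma_{2d-5}}$''. Here the transfer principle fails because the rightward paths witnessing reachability in $\ppart_{\sigma_{2d-4}}$ \emph{do} pass through $\ppart_{\sigma_{2d-5}}$ (by the very construction in Lemma~\ref{list}), and again there is no way to manufacture a cycle from a bad edge. The paper's proof gets around exactly this obstruction by working with the last \emph{five} parts $\rho,\alpha,\beta,\gamma,\delta$ and repeatedly \emph{reordering} the tail of $\sigma$: each reordering makes a different pair the ``last two'', so Claim~(i) applies to it, and an auxiliary observation (Claim~(iii), that two $\{1,2\}$-\dep~parts cannot also be $\{1,2,3\}$-\dep) guarantees that each reordering still satisfies Lemma~\ref{list}. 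This shuffling argument is the missing idea.
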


\begin{proof}
	Consider the last $5$ elements of the sequence, i.e., $\sigma_{2d-7}$, $\sigma_{2d-6}$, $\sigma_{2d-5}$, $\sigma_{2d-4}$, $\sigma_{2d-3}$. For brevity, we define $\rho = \sigma_{2d-7}$, $\alpha = \sigma_{2d-6}$, $\beta = \sigma_{2d-5}$, $\gamma = \sigma_{2d-4}$, $\delta = \sigma_{2d-3}$.
	
	%	To prove this lemma, first we denote the last 5 parts of the sequence as $\ppart_\rho$, $\ppart_\alpha$, $\ppart_\beta$, $\ppart_\gamma$, $\ppart_\delta$, see Figure \ref{fig:specialcase2}.
	For each of the parts $\ppart_{\beta}$, $\ppart_{\gamma}$, and $\ppart_{\delta}$, there are two vertices which are not $\sigma$-\rreachable. By Lemma \ref{label_one_and_two}, these vertices have labels $1$ and $2$, i.e., $\bigvertex{\beta, 1}, \bigvertex{\beta,2}, \bigvertex{\gamma,1}$, etc. % $\bigvertex{\sigma_i,j}$ where $2d-5 \leq i \leq 2d-3$ and $j \in \{1,2\}$. % indices of these vertices in each part is 1 and 2, i.e., $\bigvertex{\alpha,1}, \bigvertex{\alpha,2}, \bigvertex{\beta,1}, \bigvertex{\beta,2},\bigvertex{\gamma,1},$ and $\bigvertex{\gamma,2}.$
	Moreover, $\ppart_{\rho}$ has $d-3$ vertices which are $\sigma$-\rightreachable. From these $d-3$ vertices, there are edges to $d-3$ different vertices of $\ppart_{\alpha}$, $\ppart_{\beta}$, $\ppart_{\gamma}$, and $\ppart_{\delta}$. Indeed, these $d-3$ vertices for each part are $\sigma$-\rightreachable. As we mentioned, the vertices with labels $1$ and $2$ in these parts are not $\sigma$-\reachable. Since there are $d$ vertices in each part, there is one other vertex in each of these parts. We assume without loss of generality that the label of this vertex in each part is 3, i.e., $\bigvertex{\alpha,3}, \bigvertex{\beta,3}$, etc.

	Next, we go through a series of claims, which will lead to our conclusion. For brevity, here we provide a list of these claims along with their proofs.
	\begin{enumerate}[label= (\roman*)]
		\item \label{c1}$\ppart_{\gamma}$ and $\ppart_{\delta}$ are $\{1,2\}$-\dep.
		
		\begin{subproof2}
			As a contradiction, \mywlog, we can assume that there is a vertex $\bigvertex{\delta,x}$ such that $x \notin \{1,2\}$ and it has an outgoing edge either to $\bigvertex{\gamma,1}$ or $\bigvertex{\gamma,2}$. Since $\bigvertex{\delta,x}$ is $\sigma$-\reachable~and the rainbow path from $\bigvertex{1,1}$ to it does not pass over any vertex in $\ppart_{\gamma}$, $\bigvertex{\gamma,1}$ or $\bigvertex{\gamma,2}$ are $\sigma$-\reachable. Therefore, by Lemma \ref{label_one_and_two}, we have a rainbow cycle.
		\end{subproof2}
		\item \label{c2} $\ppart_\beta$ is \oneway~$\{1,2\}$-\dep~with $\ppart_\gamma$ and $\ppart_\delta$.
		
		\begin{subproof2}
			As a contradiction, \mywlog, there is a vertex $\bigvertex{\beta,x}$ such that $x \notin \{1,2\}$ and it has an outgoing edge either to $\bigvertex{\gamma,1}$ or $\bigvertex{\gamma,2}$. Since $\bigvertex{\beta,x}$ is $\sigma$-\rightreachable, $\bigvertex{\gamma,1}$ or $\bigvertex{\gamma,2}$ are $\sigma$-\reachable. Therefore, by Lemma \ref{label_one_and_two}, we have a rainbow cycle.
		\end{subproof2}
		\item \label{c3} $\ppart_\gamma$ and $\ppart_\delta$ are not $\{1,2,3\}$-\dep.
		
		\begin{subproof2}
			As a contradiction, suppose $\ppart_\gamma$ and $\ppart_\delta$ are $\{1,2,3\}$-\dep. Since they are $\{1,2\}$-\dep~as well, $\bigvertex{\gamma,3}$ and $\bigvertex{\delta,3}$ should have direct edge to each other. Therefore, we have a rainbow cycle consist of two vertices.
		\end{subproof2}
		
		Note that, since $\ppart_\gamma$ and $\ppart_\delta$ are $\{1,2\}$-\dep, but  not $\{1,2,3\}$-\dep, either there is a vertex $\bigvertex{\delta,x}$ where $x \notin \{1,2,3\}$ and $\bigvertex{\delta,x}$ has an  outgoing edge to $\bigvertex{\gamma,3}$, or there is a vertex $\bigvertex{\gamma,x}$ such that $x \notin \{1,2,3\}$ and $\bigvertex{\gamma,x}$ has an outgoing edge to $\bigvertex{\delta,3}$. \Wlog, here we assume the first assumption holds and therefore, vertex $\bigvertex{\delta,x}$ ($x \notin \{1,2,3\}$) has an outgoing edge to $\bigvertex{\gamma,3}$.
		
		\begin{assumption} \label{assump1}
			There exists a vertex $\bigvertex{\delta,x}$ with $x \notin \{1,2,3\}$, such that  $\bigvertex{\delta,x}$ has an outgoing edge to $\bigvertex{\gamma,3}$.
		\end{assumption}
		
		\begin{figure*}
			\centering
			\begin{subfigure}{.5\textwidth}
				\centering
				\includegraphics[scale=0.7]{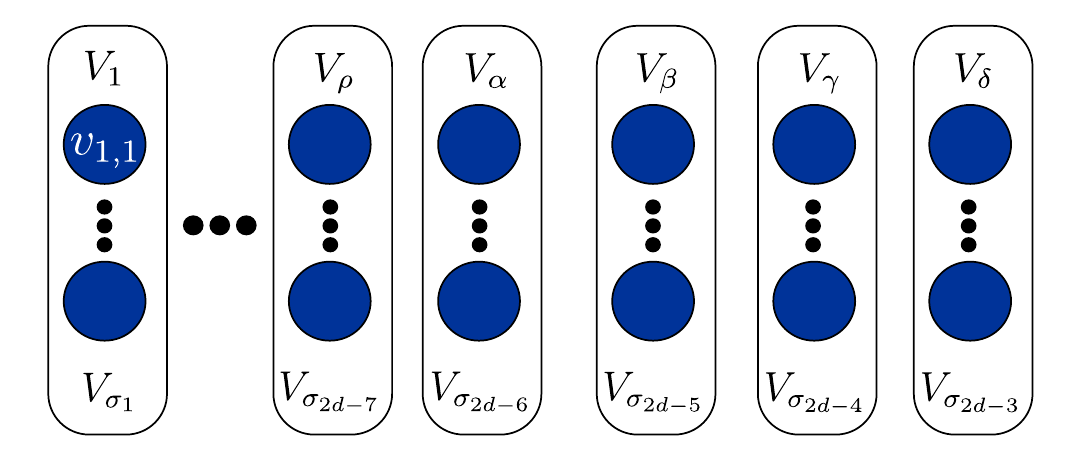}
				\caption{} \label{fig:specialcase2}
			\end{subfigure}%
			\begin{subfigure}{.5\textwidth}
				\centering
				\includegraphics[scale=0.7]{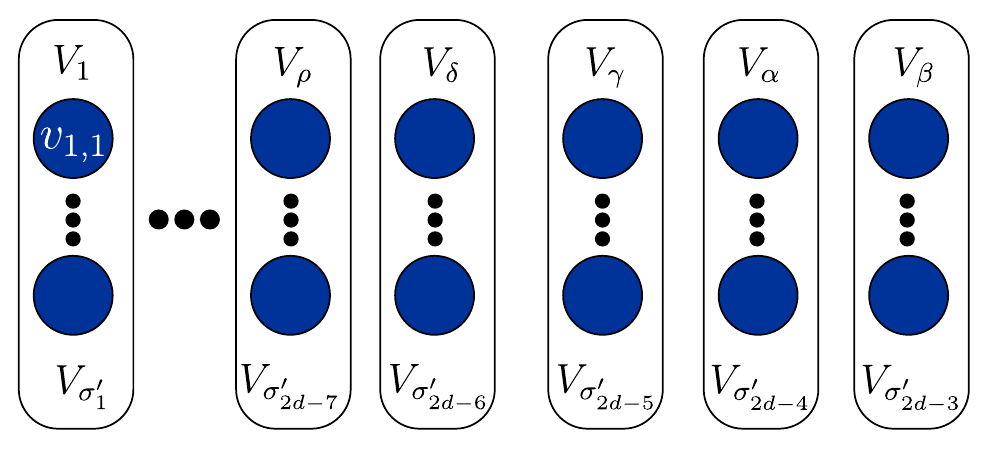}
				\caption{} \label{fig:specialcase3}
			\end{subfigure}
			\caption{Two different orderings of $\ppart_\alpha$, $\ppart_\beta$, $\ppart_\gamma$, $\ppart_\delta$, which we use in the proof of Lemma \ref{three_pairwise}.}
		\end{figure*}
		
		Considering Assumption \ref{assump1}, if we make a new sequence $\sigma'$ where $\sigma'_{2d-6} = \delta$, $\sigma'_{2d-5} = \gamma$, $\sigma'_{2d-4} = \alpha$, $\sigma'_{2d-3} = \beta$, and $\sigma'_i = \sigma_i$ for $1 \leq i \leq 2d-7$ (Figures \ref{fig:specialcase2} and \ref{fig:specialcase3} illustrate the previous formation and the new formation), then the new sequence holds the criteria mentioned in Lemma \ref{list} as well. This means that $d-2$ vertices of $\ppart_\gamma$ (and consequently $d-2$ vertices of $\ppart_\alpha$ and $\ppart_\beta$) are $\sigma'$-\rightreachable, because if we go from $\ppart_\rho$ to $\ppart_\gamma$, all vertices other than $\bigvertex{\gamma,1}$, $\bigvertex{\gamma,2}$, and $\bigvertex{\gamma,3}$ are $\sigma'$-\rightreachable. Additionally, for some $x \notin \{1,2,3\}$, there is an edge from $\bigvertex{\delta,x}$ to $\bigvertex{\gamma,3}$. This edge makes $\bigvertex{\gamma,3}$ $\sigma'$-\rightreachable~because the first $2d-7$ elements of $\sigma'$ and $\sigma$ are the same and $\bigvertex{\delta,x}$ is $\sigma'$-\rightreachable~as well as $\sigma$-\rightreachable~when $x \notin \{1,2,3\}$. Therefore, $d-2$ vertices of parts $\ppart_\gamma$, $\ppart_\alpha$, and $\ppart_\beta$ are $\sigma'$-\rightreachable, and since the first $2d-7$ elements are equal in $\sigma$ and $\sigma'$,  the new sequence $\sigma'$ has the properties we want as well.

		\item \label{c4} $\ppart_\alpha$ and $\ppart_\beta$ are $\{1,2\}$-\dep, and by Assumption \ref{assump1}, $\ppart_\gamma$ is \oneway~$\{1,2\}$-\dep~to $\ppart_\alpha$ and $\ppart_\beta$. 
		
		\begin{subproof2}
			$\ppart_\alpha$ and $\ppart_\beta$ are $\{1,2\}$-\dep~because now they are the last two vertices of the sequence, and we can have the same conclusion that we had for $\ppart_\gamma$ and $\ppart_\delta$ in Claim \ref{c1}. Also, $\ppart_\gamma$ is \oneway~$\{1,2\}$-\dep~to $\ppart_\alpha$ and $\ppart_\beta$ with the same conclusion we had in Claim \ref{c2}. 
		\end{subproof2}
		
		\item \label{c5} Considering Assumption \ref{assump1}, $\ppart_\beta$ and $\ppart_\gamma$ are $\{1,2\}$-\dep, and they are not $\{1,2,3\}$-\dep. Also, $\ppart_\alpha$ and $\ppart_\delta$ are $\{1,2\}$-\dep, and they are not $\{1,2,3\}$-\dep.
		
		\begin{subproof2}
			Since $\ppart_\beta$ is \oneway~$\{1,2\}$-\dep~to $\ppart_\gamma$ and vice versa, they are $\{1,2\}$-\dep. Therefore, with the same conclusion we used for $\ppart_\gamma$ and $\ppart_\delta$ in Claim \ref{c3}, they cannot be $\{1,2,3\}$-\dep.
			When $\ppart_\beta$ and $\ppart_\gamma$ are not $\{1,2,3\}$-\dep, we can form a new sequence $\sigma''$ such that $\beta$ and $\gamma$ come after $\rho$, and $\alpha$ and $\delta$ come as the last two elements. We can argue that this new sequence has the properties mentioned in Lemma \ref{list} as well, because $\beta$ and $\gamma$ are not $\{1,2,3\}$-\dep~and we can have the same reasoning we had for $\sigma'$ again. As a result, $\ppart_\alpha$ and $\ppart_\delta$ are the last two parts and, yet again, we can say they are $\{1,2\}$-\dep~and they cannot be $\{1,2,3\}$-\dep.
		\end{subproof2}
		
		\item \label{c6} Considering Assumption \ref{assump1}, either $\ppart_\alpha$ and $\ppart_\gamma$, or $\ppart_\beta$ and $\ppart_\delta$ are $\{1,2\}$-\dep.
		
		\begin{subproof2}
			Since $\ppart_\alpha$ and $\ppart_\delta$ are not $\{1,2,3\}$-\dep, we can form a new sequence $\sigma'''$ and put $\alpha$ and $\delta$ after $\rho$, and put $\beta$ and $\gamma$ as the last two elements. This new sequence has the properties mentioned in Lemma \ref{list} as well. Now, we have two cases (depends on whether there exists a vertex $\bigvertex{\alpha, x}$ where $x \notin \{1,2,3\}$ and $\bigvertex{\alpha, x}$ has an outgoing edge to $\bigvertex{\delta,3}$, or there exists a vertex $\bigvertex{\delta, x}$ where $x \notin \{1,2,3\}$ and $\bigvertex{\delta, x}$ has an outgoing edge to $\bigvertex{\alpha,3}$):
			\begin{itemize}
				\item  If the last 5 elements of $\sigma'''$ are $\rho, \delta, \alpha, \beta, \gamma$, then $\ppart_\alpha$ is \oneway~$\{1,2\}$-\dep~to $\ppart_\gamma$, and $\ppart_\gamma$ was \oneway~$\{1,2\}$-\dep~to $\ppart_\alpha$ as well; therefore, they are $\{1,2\}$-\dep.
				\item If the last 5 elements of $\sigma'''$ are $\rho, \alpha, \delta, \beta, \gamma$, then $\ppart_\delta$ is \oneway~$\{1,2\}$-\dep~to $\ppart_\beta$, and  $\ppart_\beta$ was \oneway~$\{1,2\}$-\dep~to $\ppart_\delta$ as well; therefore, they are $\{1,2\}$-\dep.
			\end{itemize}
		\end{subproof2}
	\end{enumerate}
	
	Now, note that by Claim \ref{c6}, we have two cases:
	\begin{itemize}
		\item If $\ppart_\alpha$ and $\ppart_\gamma$ are $\{1,2\}$-\dep, since we have already proved that $\ppart_\beta$ is $\{1,2\}$-\dep~to both $\ppart_\alpha$ and $\ppart_\gamma$, we have that parts $\ppart_\alpha$, $\ppart_\beta$, and $\ppart_\gamma$ are pairwise $\{1,2\}$-\dep.
		\item If $\ppart_\beta$ and $\ppart_\delta$ are $\{1,2\}$-\dep, since we have already proved that $\ppart_\gamma$ is $\{1,2\}$-\dep~with both of parts $\ppart_\beta$ and $\ppart_\delta$, we have that $\ppart_\beta$, $\ppart_\gamma$, and $\ppart_\delta$ are pairwise $\{1,2\}$-\dep.
	\end{itemize}
	This completes the proof of Lemma \ref{three_pairwise}.
\end{proof}	

Finally, note that by Lemma \ref{three_pairwise}, there exists three pairwise $\{1,2\}$-\dep~parts in $G$. Therefore, if we consider the induced subgraph $G'$ of $G$ containing vertices with indices $1,2$ in these three parts, since $G \in \dpi{d}$, $G'$ must belong to $\dpi{2}$. However, we know that $\rainbowperm(2)=2$, which means that $\graphh$ cannot have more than two parts.
This contradiction shows that, if graph contains at least $2d-2$ parts, then it has a rainbow cycle. Hence, $\rainbowperm(d) \le 2d-3$.

\paragraph{Improving the upper-bound to $2d-4$.}

We end this section by a discussion on how we can improve the upper bound to $2d-4$. Recall the definition of $\ech(\ell)$. As we show in Section \ref{section:exp}, we have $\ech(4) = 7$. 
This means that for every set $W$ of parts with $|W|=4$, for any vertex $\vertex \in G[W]$, there are at least 7 other vertices that have a rainbow path to $\vertex$ in $G[W]$.  Since the graph is a permutation graph, the inverse direction is also true: for any vertex $\vertex \in G[W]$, $\vertex$ has rainbow paths to at least $7$ different vertices in $G[W]$. We use this fact to decrease the upper bound on  $\rainbowperm(d)$ by one.

Consider part $\ppart_1$ and three arbitrary parts other than $\ppart_2$ (i.e., $\ppart_3$, $\ppart_4$, and $\ppart_5$). It is guaranteed that vertex $\bigvertex{1,1}$ has rainbow paths to at least $7$ different vertices in these three parts. 
Therefore, by the pigeonhole principle, vertex $\bigvertex{1,1}$ has rainbow paths to $3$ vertices in one of these parts. Assume without loss of generality that this part is $\ppart_5$. Now, we create a shortcut in the sequence by replacing $\sigma_2$, $\sigma_3$, $\sigma_4$, $\sigma_5$ with $3, 4, 5$. Note that though parts $\ppart_{3}$ and $\ppart_4$ might violate the properties of the sequence (e.g., rainbow paths to $\ppart_5$ are not necessarily $\sigma$-\rightreachable), but $\ppart_5$ can be treated the same way as $\ppart_{\sigma_5}$ in the previous sequence, which was the first part with $3$ $\sigma$-\rightreachable~vertices. Therefore, we can continue constructing the sequence from $\ppart_5$ in the same way as we construct the sequence (first, add $\sigma_6$ and $\sigma_7$, next $\sigma_8$ and $\sigma_9$, and so on). This way, we save one part in the sequence and therefore, the length of the sequence is reduced to $2d-4$.  Hence, we can conclude that if we have $\max(4, 2d-4)$ parts, then we have a rainbow cycle. As a result, $\rainbowperm(d) \le 2d-4$ for $d \ge 4$.
\section{Experiments}\label{section:exp}
In order to evaluate $\ech(\ell)$, we performed a set of experiments to calculate $\ech(\ell)$ for small values of $\ell$.
Our algorithm inputs $\ell,x$ and performs an exhaustive search to find a counter-example for $\ech(\ell)>x$. By the definition of $\ech(\ell)$, this counter-example must have at most $x$ vertices with a rainbow path to a specific vertex $\vertex$. If such an example is found, we have $\ech(\ell)\leq x$. Otherwise, when there is no such example, we can imply that $\ech(\ell)>x$. The overall result of running this experiment is shown in Table \ref{tab:table1}.

\begin{table}[!ht]
	\begin{center}
		\begin{tabular}{c|c|c} % <-- Alignments: 1st column left, 2nd middle and 3rd right, with vertical lines in between
			%\toprule
			\textbf{$\ell$} & \textbf{Lower bound} & \textbf{Upper bound}\\
			\midrule
			2 & 1 & 1\\
			3 & 3 & 3\\
			4 & 7 & 7\\
			5 & 11 & 11\\
			6 & 15 & 17\\
			7 & - & 25\\
			%\bottomrule
		\end{tabular}
	\end{center}
	\caption{Lower bounds and upper bounds on $\ech(\ell)$ obtained by the	 experiments.}\label{tab:table1}
\end{table}

As you can see in Table \ref{tab:table1}, for $ 2 \leq \ell \leq 5$, the exact value of $\ech(\ell)$ is determined by the experiments. Also, for $\ell = 6,7$, our experiments provide an upper bound on $\ech(\ell)$.
Recall that by Theorem \ref{MTH}, we have $\ech(\ell)\in \Omega(\ell^{2}/\ln \ell)$. In Lemma \ref{OBS_EX}, we prove an upper bound of $O(n^2)$ on $\ech(\ell)$.
\begin{figure}
	\centering
	\includegraphics[scale=0.8]{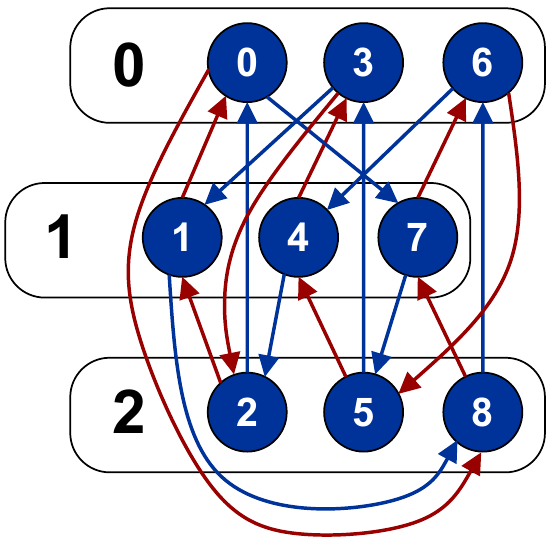}
	\caption{An illustration of the example which is defined in Lemma \ref{OBS_EX} for $\ell=3$. The edge colors in the figure differentiate the edges which connect two vertices with differences $1$ and $2$ modulo $\ell^2=9$.}
	\label{figobs71}
\end{figure}
\begin{figure}
	\centering
	\includegraphics[scale=0.8]{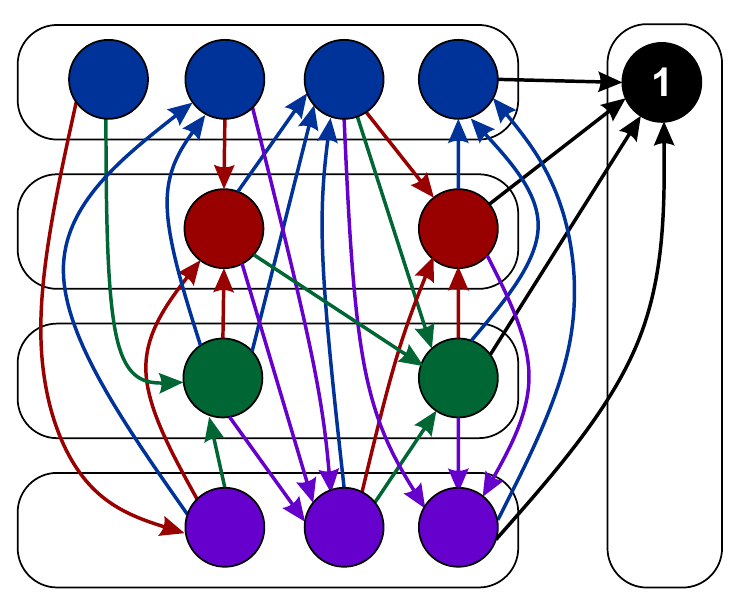}
	\caption{In this figure, you can find a compact form of a graph that shows $\ech(5)=11$. Due to lack of space and for convenience,  		
		here we only show  the induced subgraph of the vertices that have a rainbow path to vertex $1$. Let $G$ be the graph in this figure. In order to construct the entire Graph, one can proceed as follows. Merge $G$ and the graph constructed in the proof of Lemma \ref{OBS_EX} for $\ell=5$ ($G'$). The vertices of each part in the union graph are the union of the vertices in the corresponding parts in $G$ and $G'$. Similarly, the edges in the union graph are the union of the edges in $G$ and $G'$. In addition, some of the vertices in $G$ do not have  incoming edges from some other parts. For such pairs of vertices and parts, we choose an arbitrary vertex from the corresponding part of $G'$ and add a directed edge to that vertex. 
		%		
		%		Let $G'$ be the graph 
		%		 one can combine this graph with the graph $\graph'$ constructed in Lemma \ref{OBS_EX}. Then, for every vertex $\vertex$ and part $\ppart$ in $\graph$ which is not the part of $\vertex$ and $\vertex$ admits no incoming edge from $\ppart$, add an edge from an arbitrary vertex from the part corresponded to $\ppart$ in $\graph'$ to $\vertex$. In this way, the final graph does not contain any rainbow cycle since $\graph$ and $\graph'$ do not contain any rainbow cycle independently, and there is no edge from $\graph$ to $\graph'$. Additionally, due to the construction of $\graph$, no vertex in $\graph'$ has a rainbow path to vertex $1$ of $\graph$. If a vertex $\vertex \in \graph$ admits no incoming edge from a part $\ppart$, we have that all of its rainbow paths to vertex $1$ pass $\ppart$. Colors are added just for aesthetic reasons.
	} 
	\label{fig:h5-sample-11}
\end{figure}

\begin{lemma}\label{OBS_EX}
	We have  $\ech(\ell)\leq (\ell-1)(\ell-2)+1$.
\end{lemma}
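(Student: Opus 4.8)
The plan is to exhibit, for every $\ell$, an explicit graph $\graph \in \partomega{\ell}$ together with a distinguished vertex $\vertex$ so that the number of vertices with a rainbow path to $\vertex$ is at most $(\ell-1)(\ell-2)+1$; since $\ech(\ell) = \min_{\graph \in \partomega{\ell}} \min_{\vertex} f_\graph(\vertex)$, any single such example immediately gives the upper bound. Guided by Figure \ref{figobs71}, I would build the example on a circular structure: take $\ell$ parts, and arrange all vertices around a cycle of length $\ell^2$ (or a closely related modulus), labelling vertices $0,1,\dots,\ell^2-1$ and assigning vertex $i$ to part $i \bmod \ell$, so each part gets exactly $\ell$ vertices (so $d = \ell$ here, which is fine since we only need at least one vertex per part and at most $d$). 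The edges are defined by "distance residues": vertex $i$ receives an incoming edge from the nearest vertex in each other part that lies a prescribed (small, positive) number of steps counterclockwise, so that each vertex indeed has exactly one incoming edge from every other part — this verifies the third defining property of $\partomega{\ell}$.

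Next I would verify the no-rainbow-cycle condition. The key point should be that every edge moves you forward around the $\ell^2$-cycle by a bounded positive amount (at most $\ell-1$ steps, say), so traversing a rainbow cycle — which uses at most $\ell$ distinct parts, hence at most $\ell$ edges — advances the position by at most something strictly less than $\ell^2$, making it impossible to return to the start; hence no rainbow cycle exists and $\graph \in \partomega{\ell}$. Then I would bound $f_\graph(\vertex)$ for $\vertex$ the vertex labelled $0$: a rainbow path into $0$ has at most $\ell$ edges, each advancing the position by at least $1$ and at most $\ell-1$, so any vertex with a rainbow path to $0$ must sit within the last roughly $\ell(\ell-1)$ positions before $0$ around the cycle; a more careful count (using that a length-$k$ rainbow path touches $k+1$ distinct parts and the step sizes are constrained) should pin this down to exactly $(\ell-1)(\ell-2)+1$ reachable vertices, perhaps after discarding the vertices that are "too far" or in the same part as $\vertex$.

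The main obstacle I expect is getting the exact constant $(\ell-1)(\ell-2)+1$ rather than merely an $O(\ell^2)$ bound: this requires choosing the step-size assignment (which residue class of steps goes between which ordered pair of parts) precisely, and then proving a tight combinatorial inequality on how many vertices can be the endpoint of a rainbow path of length $\le \ell-1$ or $\le \ell$ with those step constraints. I would handle this by a direct induction on path length or on the number of parts used, arguing that from part-distance considerations the set of rainbow-reachable vertices in parts $\ppart_1,\dots,\ppart_{\ell-1}$ has size at most $(\ell-2)$ per part except possibly one, plus the single vertex $\vertex$ itself is excluded; summing gives $(\ell-1)(\ell-2)$ plus a correction of $1$. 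A secondary check is confirming the small cases against Table \ref{tab:table1}: for $\ell=3$ the formula gives $2\cdot 1+1=3$, for $\ell=4$ it gives $3\cdot 2+1=7$, for $\ell=5$ it gives $4\cdot 3+1=13$ — note this is weaker than the experimental value $11$, which is consistent with the lemma only claiming an upper bound, so no contradiction arises, and I would remark on this gap.
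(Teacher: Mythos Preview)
Your approach is exactly the paper's: place $\ell^2$ vertices on a cycle with vertex $i$ in part $i \bmod \ell$, and let $v_i$ receive its incoming edges from $v_{(i+1)\bmod\ell^2},\dots,v_{(i+\ell-1)\bmod\ell^2}$. The final count is much simpler than you fear --- a rainbow path has at most $\ell-1$ edges (not $\ell$), each decreasing the index by at most $\ell-1$, so only $v_1,\dots,v_{(\ell-1)^2}$ can reach $v_0$; subtracting the $\ell-2$ of these that lie in part $V_0$ gives $(\ell-1)^2-(\ell-2)=(\ell-1)(\ell-2)+1$ directly, with no induction or delicate step-size analysis needed.
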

\begin{proof}
	We construct a graph $\graph$ with parts $\{\ppart_0,\ldots,\ppart_{\ell-1}\}$ and  vertices $\vertex_0, \ldots, \vertex_{\ell^2-1}$. For each $0\le i < \ell^2$, we have $\vertex_i \in \ppart_{i\%\ell}$, and the edges of $\graph$ are as follows: each $v_i$ has incoming edges from $v_{(i+1)\% \ell^2}, \ldots, v_{(i+\ell-1)\% \ell^2}$.  In Figure \ref{figobs71}, you can see an illustration of $\graph$ for $\ell=3$. All the vertices have an incoming edge from every other part. Furthermore, since rainbow paths have at most $\ell-1$ edges, and the difference between the indices of two connected vertices is at most $\ell-1$ modulo $\ell$, only $\vertex_0, \ldots, \vertex_{(\ell-1)^2}$ can have a rainbow path to $\vertex_0$. since $\vertex_1, \ldots, \vertex_{(\ell-1)^2}$ do not have an incoming edge from $\vertex_0$, $\vertex_0$ cannot be in any rainbow cycle. Since the vertices in $\graph$ are symmetric, $\graph$ does not have any rainbow cycle. As a result, $\graph \in \partomega{\ell}$.
	
	Moreover, vertices in $\ppart_{0}\setminus \{\vertex_0\}$ cannot have a rainbow path to $\vertex_0$ because they are in the same part. If we exclude $\ppart_{0}$, $(\ell-1)^2+1-(\ell-1)=(\ell-1)(\ell-2)+1$ vertices remain. Therefore, there are at most $(\ell-1)(\ell-2)+1$ vertices that can reach $v_0$ using rainbow paths.
\end{proof}

Note that the upper bound provided by Lemma \ref{OBS_EX} exactly matches the upper bounds for $\ech(2), \ech(3),$ and $\ech(4)$. However, for $\ech(5)$ this upper bound is not tight. In Figure \ref{fig:h5-sample-11}, a tight example for $\ech(5)$ is shown.  %The examples of $\ech(6)$ and $\ech(7)$ are so much bigger and more irregular that we could cover them here. 
Based on the results extracted from the experiments, our conjecture is as follows. 
\begin{conjecture}\label{K2CJ}
	We conjecture that $\ech(\ell)=\lfloor \frac{\ell^2}{2} \rfloor -1$.
\end{conjecture}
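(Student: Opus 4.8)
}
The conjecture splits into two halves of very unequal difficulty. The easy half is the upper bound $\ech(\ell)\le\lfloor\ell^2/2\rfloor-1$: it only asks for an explicit witness, i.e.\ a graph $\graph\in\partomega{\ell}$ together with a vertex $\vertex$ such that at most $\lfloor\ell^2/2\rfloor-1$ vertices reach $\vertex$ by rainbow paths. The hard half is the matching lower bound $\ech(\ell)\ge\lfloor\ell^2/2\rfloor-1$, which in particular removes the remaining $\log$ factor from Theorem~\ref{MTH}; by Lemma~\ref{MTH2} (with $\beta=1$, $\gamma=0$) it yields $\rainbow(d)\in O(d)$, hence $\rainbow(d)\in\Theta(d)$ together with the trivial lower bound.

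For the upper bound, the plan is to prove the recursion $\ech(\ell)\le\ech(\ell-2)+2(\ell-1)$ constructively, with base cases $\ech(2)\le1$ and $\ech(3)\le3$; since $\lfloor\ell^2/2\rfloor-1-\big(\lfloor(\ell-2)^2/2\rfloor-1\big)=2(\ell-1)$ for every parity, this recursion reproduces the conjectured closed form exactly. Concretely, start from a witness $\graphh\in\partomega{\ell-2}$ for $\ech(\ell-2)$ with designated vertex $\vertex$ and build $\graph\in\partomega{\ell}$ by appending two new parts $\ppart_{\ell-1},\ppart_{\ell}$; the incoming edges of the old vertices from the two new parts, and the internal wiring of $\ppart_{\ell-1}\cup\ppart_{\ell}$, must be chosen so that (i)~the amalgam stays rainbow-cycle-free and (ii)~the total set of vertices that reach $\vertex$ grows by at most $2(\ell-1)$, all of them inside $\ppart_{\ell-1}\cup\ppart_{\ell}$ --- morally the two ``triangular strips'' already visible in the $\ell=5$ witness of Figure~\ref{fig:h5-sample-11}. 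The natural candidate is a layered/cyclic wiring in the spirit of Lemma~\ref{OBS_EX}, padded so that at rainbow-distance $k$ from $\vertex$ only $O(1)$ genuinely new vertices appear in each new part and so that no new path re-enters the old parts; checking rainbow-cycle-freeness of the amalgam is the routine-but-fiddly part, and the construction should be validated against the experimental witnesses of Table~\ref{tab:table1} for $\ell\le6$.

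For the lower bound, the plan is to rerun the induction behind Theorem~\ref{MTH}, but to replace the Stirling-type counting --- which is precisely where the spurious $\log\ell$ enters --- by an argument that tracks the \emph{shape} of the reachable set rather than only its size. The objects to control are, for the designated vertex $\vertex$, the numbers $c_k$ of vertices whose shortest rainbow path to $\vertex$ has length exactly $k$; Claims~\ref{lmm} and~\ref{clm4} bound how fast these layers can grow, and the goal is a self-improving (bootstrapping) strengthening of Claim~\ref{clm4} forcing $\sum_k c_k\ge\ech(\ell-2)+2(\ell-1)$, i.e.\ each pair of parts removed in the truncation must give back $2(\ell-1)$ fresh reachable vertices --- exactly matching the construction. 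The main obstacle lies here: the present argument only recovers $\ech(\ell-k+1)+\Theta(\ell k/\log\ell)$ per truncation before optimizing over $k$, and getting the tight per-layer gain of $\Theta(\ell)$ uniformly in $k$ appears to need a new structural ingredient --- either a direct ``triangular region'' embedding of the reachable set, or an exchange/compression argument showing that any $\graph\in\partomega{\ell}$ can be transformed, without increasing any $f_\graph(\vertex)$, into a subgraph of the conjectured extremal construction. A more modest intermediate target, already enough for $\rainbow(d)\in\Theta(d)$, is to push the lower bound from $\Omega(\ell^2/\log\ell)$ to $\Omega(\ell^2)$.
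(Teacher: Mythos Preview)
The statement is a \emph{conjecture}; the paper does not prove it. What the paper offers in its stead is (i) the weaker upper bound $\ech(\ell)\le(\ell-1)(\ell-2)+1$ of Lemma~\ref{OBS_EX}, which agrees with $\lfloor\ell^2/2\rfloor-1$ only for $\ell\le 4$, (ii) the lower bound $\ech(\ell)\in\Omega(\ell^2/\log\ell)$ of Theorem~\ref{MTH}, and (iii) the experimental data of Table~\ref{tab:table1}. There is therefore no ``paper's own proof'' to compare against.

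Your write-up is appropriately framed as a plan (``Towards Conjecture~\ref{K2CJ}'') rather than a proof, and you are explicit that the lower-bound half is the genuine obstruction and remains open. That is an accurate assessment and matches the state of affairs in the paper. Two remarks on the plan itself:

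\emph{Upper bound.} Your proposed recursion $\ech(\ell)\le\ech(\ell-2)+2(\ell-1)$ with base cases $\ech(2)\le1$, $\ech(3)\le3$ does reproduce the conjectured value arithmetically, and if carried out would already go strictly beyond the paper, since Lemma~\ref{OBS_EX} only gives $(\ell-1)(\ell-2)+1$, which is off by $\lfloor(\ell-2)^2/2\rfloor$ from the target. Be aware, though, that the paper's own experimental table does not yet confirm even this half at $\ell=7$ (the reported upper bound there is $25$, not the conjectured $23$), so the ``routine-but-fiddly'' amalgamation you sketch is not guaranteed to succeed; at minimum you should first produce an explicit witness for $\ell=7$ before committing to the inductive scheme.

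\emph{Lower bound.} Your diagnosis that the $\log\ell$ loss in Theorem~\ref{MTH} enters precisely through the Stirling step in the analysis of Claim~\ref{clm4} is correct, and the bootstrapping/layer-tracking idea is a reasonable direction. But as you yourself note, the current machinery only returns $\ech(\ell-k+1)+\Theta(\ell k/\log\ell)$ per truncation, and closing the gap to a uniform $\Theta(\ell)$ per layer is exactly the missing idea; nothing in your outline indicates how to supply it. So this part is a research proposal, not a proof sketch, and should be read as such.
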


Note that if Conjecture \ref{K2CJ} holds, then using Lemma \ref{MTH2}, we have $\rainbow(d)\in O(d)$. 

We also performed similar experiments to evaluate $\echpi(\ell)$ which is an analogous of $\ech(\ell)$ for permutation graphs. Formally, 
\[\echpi(\ell) = \min_{\graph \in \partpi{\ell}} \min_{\vertex \in \graph} \quad f_\graph(\vertex),\]
where $f_\graph(\vertex)$ is the number of the vertices in $\graph$ that have a rainbow path to $\vertex$.
Interestingly, the results were exactly the same as the previous case stated.% in Table \ref{tab:table1}. %Thus, we also conjecture the value of $\ech(\ell)$ and $\ech_p(\ell)$ are equal. 
\begin{conjecture}\label{conj:permutation}
	We conjecture that $\echpi(\ell)=\ech(\ell)$.
\end{conjecture}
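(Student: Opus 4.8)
The inequality $\echpi(\ell)\ge\ech(\ell)$ is immediate, since $\partpi{\ell}\subseteq\partomega{\ell}$ and so the minimum defining $\echpi(\ell)$ is taken over a subfamily of the graphs feasible for $\ech(\ell)$; the entire content of Conjecture \ref{conj:permutation} is therefore the reverse inequality $\echpi(\ell)\le\ech(\ell)$. Equivalently, it suffices to produce, for each $\ell$, a permutation graph $\graph\in\partpi{\ell}$ and a vertex $\vertex\in\graph$ with $f_\graph(\vertex)\le\ech(\ell)$ --- a permutation instance at least as ``hard'' as the best general one. For small $\ell$ this is already available: the cyclic graph of Lemma \ref{OBS_EX} is in fact a permutation graph (each $\vertex_i$ sends exactly one edge $\vertex_i\to\vertex_{i-k}$ for $k=1,\dots,\ell-1$, and these hit every other part exactly once), so it witnesses $\echpi(\ell)\le(\ell-1)(\ell-2)+1$; together with the values in Table \ref{tab:table1} this already forces $\echpi(\ell)=\ech(\ell)$ for $2\le\ell\le 4$, and $\ell=5$ follows from the analogous experiments mentioned in the text. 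So the real target is general $\ell$.

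For general $\ell$ the plan is to turn an arbitrary extremal instance for $\ech(\ell)$ into a permutation one without increasing the rainbow path degree. Fix $\graph\in\partomega{\ell}$ and a vertex $\vertex$ realizing $f_\graph(\vertex)=\ech(\ell)$, and let $\reachableset$ be the set of vertices having a rainbow path to $\vertex$. First I would normalize $\graph$: keep $\vertex$, all of $\reachableset$, and every edge among them untouched, and --- as in the construction underlying Figure \ref{fig:h5-sample-11} --- pad every part with fresh ``auxiliary'' vertices so that the incoming-edge condition is restored, all parts have a common size, and no auxiliary vertex has a rainbow path to $\vertex$. Then, handling one ordered pair of parts at a time, the edges between them form a bipartite digraph in which every target has in-degree exactly one; I would reroute them, touching only edges incident to auxiliary vertices, via a Hall-type (Eulerian-orientation) argument, so that each source also acquires out-degree exactly one to that part. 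The result is a graph in $\partpi{\ell}$, and since no edge incident to $\reachableset$ was altered, $f(\vertex)$ did not grow, so this permutation graph witnesses $\echpi(\ell)\le\ech(\ell)$.

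The hard part is performing the rerouting while remaining in $\partomega{\ell}$ and keeping the auxiliary vertices free of rainbow paths to $\vertex$: a newly inserted edge among auxiliary vertices could close a rainbow cycle through previously harmless vertices, or open a rainbow path that eventually reaches $\reachableset$ and hence $\vertex$, which would enlarge $f(\vertex)$. Proving that the auxiliary layer always has enough slack to absorb the missing out-edges without either effect is the crux, and I expect it to require an induction on $\ell$ (or a minimal-counterexample argument) in the spirit of the proofs of Lemma \ref{ind} and Theorem \ref{MTH}. A cleaner alternative --- which would also settle Conjecture \ref{K2CJ} --- is to skip the transformation entirely and instead exhibit an explicit optimal permutation family: generalize the cyclic construction of Lemma \ref{OBS_EX} to a family of permutation graphs attaining $\lfloor\ell^2/2\rfloor-1$, guided by Figure \ref{fig:h5-sample-11} and Table \ref{tab:table1}, and verify directly that it admits no rainbow cycle and has exactly $\lfloor\ell^2/2\rfloor-1$ vertices reaching the chosen vertex. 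I would expect this explicit-construction route to be the more tractable of the two.
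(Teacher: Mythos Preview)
The statement you are attempting is labeled a \emph{conjecture} in the paper, and the paper offers no proof of it --- only the remark that the experimental values of $\echpi(\ell)$ for small $\ell$ coincide with those of $\ech(\ell)$. So there is nothing to compare your argument against; the right question is simply whether your proposal constitutes a proof. It does not, and to your credit you say so yourself.

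Your preliminary observations are correct and worth recording: the inclusion $\partpi{\ell}\subseteq\partomega{\ell}$ gives $\echpi(\ell)\ge\ech(\ell)$ immediately, and the cyclic graph of Lemma \ref{OBS_EX} is indeed a permutation graph (each $\vertex_j$ sends exactly one edge to $\vertex_{(j-k)\bmod\ell^2}$ for $k=1,\dots,\ell-1$, hitting every other part once), so $\echpi(\ell)\le(\ell-1)(\ell-2)+1$ and hence $\echpi(\ell)=\ech(\ell)$ for $\ell\in\{2,3,4\}$. For $\ell=5$ you are leaning on the paper's experiments rather than on an explicit permutation witness; that is acceptable as corroboration but not as proof.

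The genuine gap is in the transformation step. Your constraint ``touch only edges incident to auxiliary vertices'' is in general impossible to satisfy: if an original vertex $u\in\ppart_i$ has out-degree at least $2$ into the original part of $\ppart_j$, converting the $\ppart_i\to\ppart_j$ map into a bijection forces you to delete one of those edges and reassign its target to some other vertex of $\ppart_i$ --- an operation that may involve only original vertices on both ends. Deletions are harmless, but the compensating insertions can (i) create a rainbow cycle through vertices that were previously acyclic, or (ii) give some vertex outside $\reachableset$ a rainbow path into $\reachableset$ and hence to $\vertex$. You flag exactly this in your third paragraph, but ``I expect it to require an induction on $\ell$'' is a hope, not an argument; nothing in the proofs of Lemma \ref{ind} or Theorem \ref{MTH} suggests a mechanism for controlling rainbow cycles under \emph{local edge insertions}, which is a different phenomenon from the counting arguments used there. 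Your alternative route --- building an explicit permutation family hitting $\lfloor\ell^2/2\rfloor-1$ --- would simultaneously settle Conjecture \ref{K2CJ}, which the paper regards as the central open problem; so while it may ultimately be the right approach, calling it ``more tractable'' understates what is being asked.
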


Similar to Lemma \ref{MTH2}, we can prove a simple relation between  $\echpi(\ell)$ and $\rainbowperm(d)$.
\begin{lemma}\label{lemma:htorainbow:permutation}
	Given that for some $\ell$, $\echpi(\ell)>(d-1)(\ell-1)$, we have $\rainbowperm(d)<\ell$.
\end{lemma}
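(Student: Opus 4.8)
The plan is to mirror the argument of Lemma \ref{MTH2}, but to exploit the stronger structure available in the permutation setting. Suppose for contradiction that $\rainbowperm(d) \geq \ell$, so there is a graph $\graph \in \partpi{\ell}$ (in fact with exactly $\ell$ parts, obtained by taking any $\ell$ parts of a witness graph in $\dpi{d}$ with at least $\ell$ parts — restricting to $\ell$ parts preserves membership in $\partpi{\ell}$ since the permutation property and the no-rainbow-cycle property are both inherited by induced subgraphs on a subset of parts). Fix any vertex $\vertex \in \graph$. By the definition of $\echpi(\ell)$, at least $\echpi(\ell)$ vertices of $\graph$ have a rainbow path to $\vertex$.

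The key point is that these $\echpi(\ell)$ vertices cannot all fit into $\graph$ minus the part containing $\vertex$: that part contributes at most $d-1$ "other" vertices, but it cannot contribute any, since two vertices in the same part cannot have a rainbow path between them. So all the counted vertices lie in the remaining $\ell - 1$ parts, each of which has at most $d$ vertices. Hence $\echpi(\ell) \leq d(\ell-1)$ already; but we need the sharper bound $\echpi(\ell) \leq (d-1)(\ell-1)$. To get the extra savings, I would argue that in each of the other $\ell-1$ parts there is at least one vertex with \emph{no} rainbow path to $\vertex$. This is where the permutation property is essential: pick a part $\ppart$ not containing $\vertex$; by the permutation property, $\vertex$ itself has exactly one outgoing edge into $\ppart$, and more importantly we can run the sink argument of Lemma \ref{helping_node} inside $\ppart$ together with the "neighbourhood of $\vertex$". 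Concretely, consider the vertices of $\ppart$ that have an edge to $\vertex$ — wait, rather: among all vertices having a rainbow path \emph{from} $\vertex$... Let me instead use the cleaner route: in a permutation graph, $w$ has a rainbow path to $\vertex$ in $\graph$ iff $\vertex$ has a rainbow path to $w$ in the reverse graph, which is again a permutation graph in $\partpi{\ell}$. Using this symmetry, it suffices to show that for each other part $\ppart$ there is a vertex of $\ppart$ that $\vertex$ cannot reach by a rainbow path; equivalently, that the set of vertices reachable from $\vertex$ by rainbow paths misses at least one vertex per part — this follows because if $\vertex$ could reach, via rainbow paths avoiding a fixed part $\ppart'$, every vertex of some part $\ppart$, then combining with the single edge each such vertex sends into $\vertex$'s part would produce a rainbow cycle; a careful pigeonhole on which vertex of $\ppart$ is hit makes this precise.

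From $\echpi(\ell) \leq (d-1)(\ell-1)$ we get an immediate contradiction with the hypothesis $\echpi(\ell) > (d-1)(\ell-1)$, which forces $\rainbowperm(d) < \ell$, as desired.

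I expect the main obstacle to be pinning down the precise combinatorial reason that each of the $\ell-1$ non-$\vertex$ parts contributes at most $d-1$ (not $d$) reachable vertices — that is, locating the one "blocked" vertex in each part. The clean statement is: fixing $\vertex$ and any other part $\ppart_i$, if all $d$ vertices of $\ppart_i$ reached $\vertex$ by rainbow paths, then by the permutation property $\vertex$ has exactly one in-neighbour $w$ in $\ppart_i$, and $w$'s own rainbow path to $\vertex$ together with the edge $\vertex \to w$ (which exists by the permutation property on the reverse, or directly: $w$ has an incoming edge from $\vertex$'s part — careful here, we need $\vertex \to w$ specifically) would close a rainbow cycle; getting exactly the right edge to exist is the delicate part, and is presumably why the lemma is stated with the permutation hypothesis rather than in general.
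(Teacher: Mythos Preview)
Your overall plan is the paper's argument in contrapositive form: the paper shows that if more than $(d-1)(\ell-1)$ vertices reach $\vertex$, then by pigeonhole some part $\ppart_i$ has \emph{all} its vertices reaching $\vertex$, and then closes a rainbow cycle; you try to show directly that each part contributes at most $d-1$ reachable vertices, which is the same statement. So the strategy is fine and matches the paper.

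The gap is in the closing step. You pick $w$ to be $\vertex$'s \emph{in}-neighbour in $\ppart_i$ and then look for the edge $\vertex \to w$, correctly noting that this edge need not exist. That is the wrong vertex. The permutation property guarantees that $\vertex$ has exactly one \emph{outgoing} edge to each other part; take $u \in \ppart_i$ with $\vertex \to u$. If all vertices of $\ppart_i$ have a rainbow path to $\vertex$, then in particular $u$ does, and $\vertex \to u \rightsquigarrow \vertex$ is a rainbow cycle (the path from $u$ to $\vertex$ visits each part at most once, starts in $\ppart_i$, ends in $\vertex$'s part, and the extra edge only revisits those two endpoints). This is exactly what the paper does, in two lines. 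Once you use the out-neighbour instead of the in-neighbour, the reverse-graph detour, the invocation of Lemma~\ref{helping_node}, and the worry in your final paragraph all disappear.
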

\begin{proof}
	As a contradiction suppose that $\bothpi{\ell}{d} \neq \emptyset$ and let $\graph \in \bothpi{\ell}{d}$ be a graph with parts $\ppart_{1},\ldots, \ppart_{\ell}$, and let $\vertex \in \ppart_{\ell}$ be a vertex of $G$.
	Since $\echpi(\ell)>(d-1)(\ell-1)$, at least $(d-1)(\ell-1)$ vertices in parts $\{\ppart_{1},\ldots, \ppart_{\ell-1}\}$ have a rainbow path to $\vertex$. Therefore, by the pigeonhole principle, there is a part $\ppartt \in \{\ppart_{1},\ldots, \ppart_{\ell-1}\}$ such that all of its vertices have a rainbow path to $v$. Also, since $v$ has an outgoing edge to $\ppartt$, we have a rainbow cycle in $G$, which is a contradiction.
\end{proof}
\begin{lemma}\label{lemma:ourconjecturetorainbow:permutation}
	 For $d\geq 3$, $\echpi(\ell)\geq \frac{\ell^2}{2} -1$ implies $\rainbowperm(d)\leq 2d-3$.
\end{lemma}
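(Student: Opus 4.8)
The plan is to invoke Lemma~\ref{lemma:htorainbow:permutation} with the single, carefully chosen value $\ell = 2d-2$, and then verify the required numerical inequality. Lemma~\ref{lemma:htorainbow:permutation} tells us that $\echpi(\ell) > (d-1)(\ell-1)$ forces $\rainbowperm(d) < \ell$; so if we can show that the hypothesis $\echpi(2d-2) \geq \tfrac{(2d-2)^2}{2} - 1$ beats the threshold $(d-1)(2d-3)$, we immediately get $\rainbowperm(d) < 2d-2$, i.e.\ $\rainbowperm(d) \leq 2d-3$, which is exactly the claim.

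Concretely, first I would expand the lower bound coming from the hypothesis: $\tfrac{(2d-2)^2}{2} - 1 = 2d^2 - 4d + 1$. Next I would expand the threshold from Lemma~\ref{lemma:htorainbow:permutation} at $\ell = 2d-2$: $(d-1)\bigl((2d-2)-1\bigr) = (d-1)(2d-3) = 2d^2 - 5d + 3$. Then I would compare the two: $2d^2 - 4d + 1 > 2d^2 - 5d + 3$ is equivalent to $d > 2$, hence holds for every $d \geq 3$. Chaining these, $\echpi(2d-2) \geq 2d^2 - 4d + 1 > (d-1)(2d-3)$, so the hypothesis of Lemma~\ref{lemma:htorainbow:permutation} is met and its conclusion yields $\rainbowperm(d) < 2d-2$.

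Finally I would note the minor bookkeeping point that $2d-2 \geq 4 \geq 2$ when $d \geq 3$, so $\partpi{2d-2}$ is a meaningful object and Lemma~\ref{lemma:htorainbow:permutation} genuinely applies; and that since $\rainbowperm(d)$ is an integer, $\rainbowperm(d) < 2d-2$ is the same as $\rainbowperm(d) \leq 2d-3$. There is essentially no hard step here — the whole argument is a one-line application of the previous lemma plus the elementary inequality $d > 2$ — so the ``main obstacle'' is merely making sure the constant $\tfrac{\ell^2}{2}-1$ in the hypothesis is matched against the right $\ell$; picking $\ell = 2d-2$ rather than, say, $2d-3$, is what makes the arithmetic come out to exactly $2d-3$.
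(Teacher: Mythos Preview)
Your proposal is correct and essentially identical to the paper's proof: both apply Lemma~\ref{lemma:htorainbow:permutation} at $\ell = 2d-2$, expand $\tfrac{(2d-2)^2}{2}-1 = 2d^2 - 4d + 1$, and check that this exceeds $(d-1)(2d-3)$ exactly when $d>2$. The only cosmetic difference is that the paper writes the comparison as $2d^2-4d+1 = (d-1)(2d-3) + (d-2)$ rather than expanding both sides separately.
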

\begin{proof}
	We have
	\[\begin{aligned}
		\echpi(2d-2) &\geq \frac{(2d-2)^2}{2} -1 \\
		& = 2d^2-4d+1\\
		& = (d-1)((2d-2)-1) + (d-2)\\
		& > (d-1)((2d-2)-1). & d>2
	\end{aligned}\] 
	Therefore, by Lemma \ref{lemma:htorainbow:permutation}, $\rainbowperm(d)<2d-2$. Since $\rainbowperm(d)$ is an integer, $\rainbowperm(d)\leq 2d-3$.
\end{proof}

Lemma \ref{lemma:ourconjecturetorainbow:permutation} shows that even if we prove Conjecture \ref{conj:permutation} is correct, we cannot get a better upper bound for $\rainbowperm(d)$ with a simple connection between $\rainbowperm(d)$ and $\echpi(\ell)$. However, we believe proving Conjecture \ref{conj:permutation} would be a good warm-up in the way of  proving Conjecture \ref{K2CJ}. 
%\newpage
\bibliographystyle{abbrv}
\bibliography{draft}

\appendix

\end{document}